\begin{document}

\title{Revisiting Non-Progressive Influence Models: Scalable Influence Maximization in Social Networks}
\author{ {\bf Golshan Golnari \thanks{$^*$ G. Golnari and A. Asiaee contributed equally to this work.}} \\
	Electrical and Computer \\
	Engineering Dept. \\
	University of Minnesota\\
	Minneapolis, MN 55455 \\
	\And
	{\bf Amir Asiaee T. $^*$}  \\
	Computer Science and\\
	Engineering Dept. \\
	University of Minnesota\\
	Minneapolis, MN 55455 \\
	\And
	{\bf Arindam Banerjee}   \\
	Computer Science and\\
	Engineering Dept. \\
	University of Minnesota\\
	Minneapolis, MN 55455 \\
	\And
	{\bf Zhi-Li Zhang}   \\
	Computer Science and\\
	Engineering Dept. \\
	University of Minnesota\\
	Minneapolis, MN 55455 \\
}

\maketitle

\begin{abstract} 
\vspace*{-0.05in}
Influence maximization in social networks has been studied extensively in computer science community for the last decade. 
However, almost all of the efforts have been focused on the \emph{progressive} influence models, such as independent cascade (IC) and Linear threshold (LT) models, which cannot capture the \textit{reversibility of choices}. 
In this paper, we present the Heat Conduction (HC) model which is a \textit{non-progressive influence model} and has favorable real-world interpretations. 
Moreover, we show that HC unifies, generalizes, and extends the existing non-progressive models, such as the Voter model \cite{even-dar_note_2007} and non-progressive LT \cite{kempe_maximizing_2003}. 
We  then prove that selecting the optimal seed set of influential nodes is NP-hard for HC but by establishing the submodularity of influence spread, we can tackle the influence maximization problem with a scalable and provably near-optimal greedy algorithm. 
To  the  best  of  our  knowledge, we  are  the  first  to present a scalable solution for influence maximization under non-progressive LT model, as a special case of HC model. 
In sharp contrast to the other greedy influence maximization methods, our fast and efficient \textsc{C2Greedy} algorithm benefits from two analytically computable steps: \emph{closed-form} computation for finding the influence spread as well as the greedy seed selection. 
Through extensive experiments on several and large real and synthetic networks, we show that \textsc{C2Greedy} outperforms the state-of-the-art methods, under HC model, in terms of both influence spread and scalability. 

\end{abstract}
\vspace*{-0.05in}

\vspace*{-2pt}
\section{INTRODUCTION}\label{sec:intro}
\vspace*{-0.05in}
 
Motivated by viral marketing and other applications, the problem of  influence maximization in a social network has attracted much attention in recent years. Given a social network where nodes represent users in a social group, and edges represent relationships and interactions between the users (and through which they influence each other), the basic idea of influence maximization is to select an initial set of ``most influential'' users (often referred to as the {\em seeds}) among all users  so as to maximize the total influence under a given diffusion process (often referred to as the {\em influence model}) on the social network. In the context of viral marketing, this amounts to by initially targeting a set of influential customers,  e.g., by providing them with free product samples, with the goal to trigger a cascade of influence through ``word-of-mouth'' or recommendations to friends to maximize the total number of customers adopting the said product.
Domingos and Richardson \cite{domingos_mining_2001} introduced this algorithmic problem to the Computer Science community and Kempe et al. \cite{kempe_maximizing_2003} made the topic vastly popular under the name of \emph{influence maximization}.
They studied two influence models, the independent cascade (IC) model and the linear threshold (LT) model, and applied a greedy method to tackle the influence maximization problem \cite{kempe_maximizing_2003}.
Unfortunately Kempe et al.'s approach \cite{kempe_maximizing_2003} for calculating the influence spread is based on Monte Carlo simulations which does not scale to large networks \cite{chen_scalable_2010, chen_scalable_20101}. As the result, it motivated researchers to either improve the scalability \cite{chen_scalable_2010, chen_scalable_20101} or study more tractable influence models \cite{rodriguez_uncovering_2011, due_scalable_2013}. 

The focus of almost all of these earlier studies are, however, \emph{progressive} influence models, including LT and IC models, in which once a costumer adopts a product or a user performs an action she cannot revert it. Retweeting news and 
sharing videos in online social network websites, 
are examples of progressive, i.e. irreversible actions. 
Nevertheless, there are numerous real world instances where the actions are \emph{non-progressive} especially in technology adoption domain. 
For example, adopting a cell phone service provider, such as AT\&T and T-mobile, is a non-progressive action where a user can switch between providers. The objective of influence maximization in this example is to persuade more users to adopt the intended provider for a longer period of time. 
To capture the reversibility of choices in real scenarios, we present Heat Conduction (HC) model which has favorable real-world interpretation. We also show that HC unifies, generalizes, and extends the existing non-progressive models, including non-progressive LT (NLT) \cite{kempe_maximizing_2003} and Voter model \cite{even-dar_note_2007} (see Section \ref{sec:discussion}). In contrast to the Voter model, HC does not \textit{necessarily} reach consensus, where one product dominates and extinguishes the others after finite time, so the proposed HC model can explain the \emph{coexistence of multiple product adoptions}, which is a typical phenomena in real world. In addition, HC model incorporates both ``social'' and ``non-social''  factors, e.g., intrinsic inertia or reluctance of some users in adopting a new idea or trying out a new product, external ``media effect'' which exerts a ``non-social'' influence in promoting certain ideas or products. 

We tackle the influence maximization problem under HC influence model with a \emph{scalable} and provably \emph{near-optimal} solution.
Kempe et al.'s approach \cite{kempe_maximizing_2003} for influence maximization under NLT model, is to reduce the model to (progressive) LT by replicating the network as many as time progresses and compute the influence spread by the same slow Monte Carlo method for the resulted huge network. This approach is practically impossible for large networks, specially for the \emph{infinite time horizon}. 
We also prove that contrary to the Voter, for which the influence maximization can be solved \emph{exactly} in polynomial time \cite{even-dar_note_2007}, the influence maximization for HC is NP-hard. We develop an approximation (greedy) algorithm for influence maximization under HC for infinite time horizon with guaranteed \emph{near-optimal} performance. Exploiting probability theory and novel Markov chain metrics, we are able to provide \emph{closed form} solution for both computing the influence spread and greedy selection step which entirely removes the need to explicitly evaluate each node as the best seed candidate; our fast and scalable algorithm, {\sc C2Greedy}, for influence maximization under HC removes the computational barrier that prevented the literature from considering the non-progressive influence models.

Our extensive experiments on several and large real and synthetic networks validate the efficiency and effectiveness of our method which outperforms the state-of-the-art in terms of both influence spread and scalability; we show that the most influential nodes under progressive models not necessarily act as the most influentials under non-progressive models and a \emph{designated} non-progressive algorithm is necessary. Moreover, we present the first real non-progressive cascade dataset which models the non-progressive propagation of research topics among network of researchers. We are planning to make this data publicly available. Our contribution in this paper is summarized as follows:
\\$\bullet$ We propose HC influence model that has favorable real world interpretations, and unifies, generalizes, and extends the existing non-progressive models and .
\\$\bullet$ We show HC has three noble key properties which enables us solving influence maximization efficiently. 
\\$\bullet$ To the best of our knowledge, we are the first to present a scalable solution for influence maximization under non-progressive LT model.
\\$\bullet$ We demonstrate high performance and scalability of our algorithm via extensive experiments and present the first ever real non-progressive cascade dataset. 

The rest of this paper is organized as follows. After a brief review on the related work, we introduce our HC model in Section \ref{ourModel}. Next, we show how to compute the influence spread for HC in closed form in Section \ref{heatEq}. In Section \ref{sec:kmip}, we present our efficient algorithm {\sc C2Greedy} for influence maximization under the HC model. Section \ref{sec:discussion} explains how HC unifies other non-progressive models and provides a more complete view of the HC model. Finally we conduct comprehensive experiments in Section \ref{sec:expr} to illustrate performance of our algorithm. 

{\bf Related work.}
After the debut of influence maximization as a data mining problem \cite{domingos_mining_2001}, it is formulated as a discrete optimization problem based on progressive influence models (LT and IC) from social and physical sciences \cite{kempe_maximizing_2003}. Kempe et al. \cite{kempe_maximizing_2003} show that influence maximization is NP-hard under LT and IC models but the influence spread is submodular for the models which enables them to use the greedy method. Although the algorithm is greedy it usually does not scale, because it needs to compute influence spread many times in each iteration while influence spread has no known closed form and is estimated by Monte Carlo simulation. The follow-up studies \cite{leskovec_cost-effective_2007,chen_scalable_2010,chen_scalable_20101,goyal2011simpath,rodriguez_uncovering_2011,due_scalable_2013} attempt to speed up this process by avoiding or decreasing the need for the MC simulation (for further details of the studies on progressive influence model please refer to Supplementary). Kempe et al. \cite{kempe_maximizing_2003} also introduce a non-progressive version of the LT influence model (NLT) and try to tackle the influence maximization problem under NLT by reducing the model to (progressive) LT, discussed in Section \ref{sec:intro}. 
\\Voter model, as the most well-known non-progressive model, is originally introduced in \cite{clifford_model_1973, holley_ergodic_1975} and adopted for viral marketing in \cite{even-dar_note_2007}. Even-Dar and Shapira show that under Voter model, highest degree nodes are the solution of influence maximization  \cite{even-dar_note_2007}. Unfortunately since the Voter model reaches consensus, i.e. one product remains in long term, it can not explain the coexistence of multiple product adoptions, which is a typical case in many real product adoptions.

\vspace*{-2pt}
\section{HEAT CONDUCTION INFLUENCE MODEL}\label{ourModel}
\vspace*{-0.05in}
The heat conduction (HC) influence model is inspired by the resemblance of influence diffusion through a social network to heat conduction through an object, where heat is transferred from the part with higher temperature to the part with lower temperature.  
We provide a simple description of HC in this section and defer the complete view of it as well as its unification property to Section \ref{sec:discussion}. 

Considering directed graph $G=(\mathcal{V},\mathcal{E})$ which represents the social (influence) network, the directed edge from node $i$ to node $j$ declares that $i$ follows $j$ (or equivalently $j$ influences $i$). Edge weight $\omega_{ij}$ indicates the amount that $i$ trusts $j$ and unless specified $0 \leq \omega_{ij} \leq 1$.  
The set of $i$'s neighbors, representing the nodes that influence $i$, is denoted by $\mathcal{N}(i)$. The influence cascade can be assumed as a \emph{binary} process in which a node who adopts the ``desired'' product is called \emph{active}, and \emph{inactive} otherwise. Note that this assumption holds for the cases with multiple products as well, where the objective is to maximize the influence (publicity) of the ``desired'' product, and the rest are all considered ``undesired''. \emph{Seed} is a node that has been selected for the direct marketing and remains active during the entire process. In HC model, the influence cascade is initiated from a set of seeds $S$ and arbitrary values for other nodes. The \emph{choice} of node $i$ to become active or inactive at time $t+1$ is a linear function of the choices of its neighbors at time $t$ as well as its intrinsic (or non-social) bias toward activeness:
\begin{equation}\label{eq:binHC}
\small Pr\big(\delta_{i}(t+1) = 1 | \mathcal{N}(i)\big) = \beta_i b + (1-\beta_i) \sum_{j \in \mathcal{N}(i)}\omega_{ij} \delta_{j}(t),
\end{equation} 
where $\beta_i \in (0, 1)$, $b \in [0,1]$, and $\sum_{j \in \mathcal{N}(i)}\omega_{ij} = 1$. Indicator function $\delta_i(t)$ is 1 when node $i$ adopts the desired product at time $t$ and 0 otherwise. We refer to (\ref{eq:binHC}) as the \emph{choice rule}. The dependence on neighbors in (\ref{eq:binHC}) represents the ``social" influence and the bias value $b$ accounts for ``non-social" influence which comes from any source out of the neighbors, e.g. media. The ``non-social" influence can explain the cases where the ``social" influence alone fails to model the cascades \cite{cha_measurement-driven_2009}.
We discuss further interpretation and extensions of HC in Section \ref{sec:discussion}.

Replacing the choice rule (\ref{eq:binHC}) in $Pr\big(\delta_{i}(t+1)\big) = \sum Pr(\delta_{i}(t+1)| \mathcal{N}(i))Pr(\mathcal{N}(i))$ results in the following  \emph{probabilistic} interpretation of the original binary HC model. Each node $i$ has a value at time $t$ denoted by $u(i,t)$ which represents the \emph{probability} that she adopts the desired product at time $t$:
\begin{equation}
u(i, t+1) = \beta_ib + (1 - \beta_i) \sum_{j \in \mathcal{N}(i)} \omega_{ij}u(j, t),
\end{equation}
Simple calculation shows that the bias value $b$ can be integrated into the network by adding a bias node $n$ (assuming that the network has $n-1$ nodes) with adoption probability $b$. Therefore, HC dynamics converts to the following: 
\begin{equation}
\label{eq:rw3}
u(i, t+1) = \sum_{j \in \mathcal{EN}(i)} P_{ij}u(j, t), 
\end{equation}
where $\mathcal{EN}(i) = \mathcal{N}(i) \cup \{n\}$ is the extended neighborhood, $P_{in} = \beta_i$, $u(n, t) = b$, and $\forall j \neq n : P_{ij} = (1-\beta_i)\omega_{ij}$. Rewriting (\ref{eq:rw3}) in the following form shows that HC follows the discrete form of \textbf{Heat Equation} \cite{lawler_random_2010}, which reveals the naming reason of HC influence model: $u(:,t+1)-u(:,t)=(P-I)u(:,t)$,
where $\mathcal{L} = I - P$ is the Laplacian matrix, $u(i,t)$ is the temperature of particle $i$ at time $t$, and ``:'' denotes the vector of all entries. 

\vspace*{-2pt}
\section{HC INFLUENCE SPREAD}\label{heatEq}
\vspace*{-0.05in}
Influence spread of set $\mathcal{S}$ for time $t$ is defined as the expected number of active nodes at time $t$ of a cascade started with $\mathcal{S}$.
Knowing that $u(i,t)$ is the probability of node $i$ being active at time $t$, \emph{influence spread} (or function) $\sigma(\mathcal{S}, t)$ is computed from:
\begin{equation}
\label{eq:inffun}
\sigma(\mathcal{S}, t)=\sum_{i\in \mathcal{V}} u(i,t).
\end{equation}
Motivated by the classical heat transfer methods, the initial and the boundary conditions should be specified to solve the heat equation and find $u(i,t)$ uniquely. 
%
In HC, the seeds $\mathcal{S}$ and the bias node are the boundary nodes and the rest are interiors.
Assuming $\mathcal{S}=\{n-1,n-2,...,n-|\mathcal{S}|\}$ and $n$ as the bias node, HC is defined by the following heat equation system:
\begin{eqnarray} \label{eq:completeEq}
\text{Main equation}&:& u(:,t+1)-u(:,t)=-\mathcal{L}u(:,t) \nonumber
\\ \text{Boundary conditions}&:& u(n,t)=b, \nonumber  
\\& & u(s,t)=1 \quad \forall s \in \mathcal{S}  
\\\text{Initial condition}&:& u(:,0)=z+ [0,...,0,\underbrace{1,...,1}_{|\mathcal{S}|},b]', \nonumber
\end{eqnarray}
where, as indicated in this formula, initial value $u(:,0)$ is the sum of two vectors: the initial values of the interior nodes ($z$) and the initial values of boundaries (the second vector). The corresponding entries of boundaries in $z$ are zero.
In the continue, exploiting probability theory and novel Markov chain metrics, we provide a closed form solution to this heat equation system.

Social network $G$ can be interpreted as an absorbing Markov chain where the absorbing states (boundary set $\mathcal{B}$) are the seeds and bias node, $\mathcal{B} = \mathcal{S} \cup \{n\}$, and $P_{ij}$ is the probability of transition from $i$ to $j$.
The adoption probability of the nodes at time $t$, i.e. $u(:,t)$, can be written as a linear function of initial condition (\ref{eq:rw3}):
\begin{equation} 
\label{eq:mainProcess}
u(:,t)=P^tu(:,0),
\end{equation}
where $P$ is row-stochastic and has the following block form: 
$P=\left[
  \begin{array}{ c c }
     R & B \\
     \mathbf{0} & I
  \end{array} \right]$. The superscript indicates the time here. 
The boundary set by definition have fixed values over time and do not follow any other nodes which leads to the zero and identity blocks $I_{(|\mathcal{S}|+1) \times (|\mathcal{S}|+1)}$. 
Blocks $R$ and $B$ represent transition probabilities of interior-to-interior and interior-to-boundary respectively. Note that different boundary conditions in (\ref{eq:completeEq}), like different seed set, result in a different $P$. Therefore both $P$ and $u(:,t)$ implicitly depend on $\mathcal{S}$.

When $t$ goes to infinity, transient part of $u$ vanishes and it converges to the steady-state solution $v = u(:,\infty)$, which is independent of time and is Harmonic, meaning that it satisfies $Pv=v$ \cite{doyle_random_1984}. Assume $v=\big( v_{\mathcal{I}}, v_{\mathcal{B}} )^T$ 
where $\mathcal{I} = \mathcal{V} \setminus \mathcal{B}$ is the set of interior nodes, then the value of interior nodes is computed from boundary nodes \cite{doyle_random_1984}:
\begin{equation} \label{eq:vQv}
v_{\mathcal{I}}=(I-R)^{-1}Bv_{\mathcal{B}} = FBv_{\mathcal{B}} = Qv_{\mathcal{B}}.
\end{equation}
where $F = (I - R)^{-1}$ is the \emph{fundamental matrix} and $F_{ij}$ indicates the average number of times that a random walk started from $i$ passes $j$ before absorption by any absorbing (boundary) nodes \cite{doyle_random_1984}. Also, the \emph{absorption probability} matrix $Q = FB$ is a $(n-|\mathcal{S}|-1)\times (|\mathcal{S}|+1)$ row-stochastic matrix, where $Q_{ij}$ denotes the probability of absorption of a random walk started from $i$ by the absorbing node $j$ \cite{doyle_random_1984}.

From here on, without loss of generality, we assume $b$ to be zero in equation (\ref{eq:completeEq}). Using (\ref{eq:mainProcess}) and (\ref{eq:vQv}), the influence spreads for infinite time can be computed in closed form:
\begin{equation}\label{eq:generalInf}
\sigma(\mathcal{S},\infty) = \sum_{i=1}^n v(i) = |\mathcal{S}| + \sum_{i\in \mathcal{I}}\sum_{s\in \mathcal{S}} Q^{\mathcal{S}}_{is}.
\end{equation}
The superscript in $Q^{\mathcal{S}}$ and $P^{\mathcal{S}}$ explicitly indicates that they are functions of seed set $\mathcal{S}$. Note that in fact they are depending on the total boundary set, $\mathcal{B}=\mathcal{S}\cup \{n\}$, but since the bias node is always a boundary, throughout this paper we discard it from the superscripts to avoid clutter.

\vspace*{-2pt}
\section{INFLUENCE MAXIMIZATION FOR HC}\label{sec:kmip}
\vspace*{-0.05in}
In this section we solve the influence maximization problem for \textit{infinite time horizon} under HC model, formulated as follows: 
\begin{equation}
\label{eq:opt}
\mathcal{S^*}=\argmax_{\mathcal{S}\subseteq \mathcal{V}} \sigma(\mathcal{S}, \infty), \qquad s.t. \qquad  |\mathcal{S}| \leq K.
\end{equation}

\subsection{INFLUENCE MAXIMIZATION FOR $K=1$}
\vspace*{-0.1in}

Based on (\ref{eq:generalInf}) and (\ref{eq:opt}), the most influential person (MIP) is the solution of the following optimization problem: $\argmax_{\mathcal{V} \setminus \{n\}}  \sum_{i\in \mathcal{V} \setminus \{s,n\}} Q^{\{s\}}_{is}$.  
This equation states that to find the MIP, we need to pick each candidate $s$ and make it absorbing and compute the new $P$ as $P^{\{s\}}$ which in turn changes $Q$ to $Q^{\{s\}}$, and repeat this procedure $n - 1$ times for all $s$. This procedure is problematic because for each $Q^{\{s\}}$ we require to recompute matrix $F^{\{s\}}$ which involves matrix inversion. 
But, in the following theorem we show that we are able to do this by only one matrix inversion instead of $n - 1$ matrix inversions, and having matrix $F^{\emptyset}$ is enough to find the most influential person of the network ($\emptyset$ sign indicated no seed is selected):
\begin{theorem}
\label{MIP_infinite}
MIP under HC (\ref{eq:binHC}) when $t\rightarrow\infty$ can be computed in closed form from the following formula:
\begin{equation}
MIP=\argmax_{s\in \mathcal{V} \setminus \{n\}}  \sum_{i \in \mathcal{V} \setminus \{n\}} \frac{F^{\emptyset}_{is}}{F^{\emptyset}_{ss}}=\argmax  \mathbf{1'} \breve{F}^{\emptyset},
\end{equation}
where $\breve{F}^{\emptyset}$ is $F^{\emptyset}$ when each of its columns is normalized by the corresponding diagonal entry. Note that left multiplication of all ones row vector is just a column-sum operation.
\end{theorem}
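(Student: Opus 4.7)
The plan is to rewrite each candidate's infinite-horizon influence spread (\ref{eq:generalInf}) in terms of quantities computed from the single ``base'' fundamental matrix $F^{\emptyset}$, in which only the bias node $n$ is absorbing. By (\ref{eq:generalInf}) specialized to $\mathcal{S}=\{s\}$,
\begin{equation*}
\sigma(\{s\},\infty) \;=\; 1 + \sum_{i\in \mathcal{V}\setminus\{s,n\}} Q^{\{s\}}_{is},
\end{equation*}
so maximizing $\sigma(\{s\},\infty)$ over $s$ is the same as maximizing $\sum_{i\in \mathcal{V}\setminus\{n\}} Q^{\{s\}}_{is}$ (the $i=s$ term contributes $Q^{\{s\}}_{ss}=1$). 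The whole problem thus reduces to computing the absorption probabilities $Q^{\{s\}}_{is}$ for every candidate $s$ without ever rebuilding $F^{\{s\}}$.

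The central identity I would prove is
\begin{equation*}
Q^{\{s\}}_{is} \;=\; \frac{F^{\emptyset}_{is}}{F^{\emptyset}_{ss}} \qquad (i\neq n,\; s\neq n).
\end{equation*}
To establish this I would interpret both sides probabilistically on the random walk induced by $P$ with only $n$ absorbing. Let $\tau_s$ be the first hitting time of state $s$ and $\tau_n$ the absorption time at $n$. The entry $F^{\emptyset}_{is}$ equals the expected number of visits to $s$ by a walk started at $i$ before absorption at $n$. A standard first-entrance / strong Markov decomposition gives
\begin{equation*}
F^{\emptyset}_{is} \;=\; \Pr_i(\tau_s < \tau_n)\,\cdot\, F^{\emptyset}_{ss},
\end{equation*}
since conditional on ever reaching $s$, the expected number of visits from that point on is exactly $F^{\emptyset}_{ss}$ (by the strong Markov property at time $\tau_s$). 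On the other hand, when we make $s$ absorbing, the event ``absorbed at $s$'' is precisely $\{\tau_s<\tau_n\}$, because turning $s$ into an absorbing state does not change the walk up to the moment it first visits $s$. Hence $Q^{\{s\}}_{is}=\Pr_i(\tau_s<\tau_n)=F^{\emptyset}_{is}/F^{\emptyset}_{ss}$.

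Plugging this identity into the reduced objective gives
\begin{equation*}
\sigma(\{s\},\infty) \;=\; \sum_{i\in \mathcal{V}\setminus\{n\}} \frac{F^{\emptyset}_{is}}{F^{\emptyset}_{ss}} \;=\; \mathbf{1}'\breve{F}^{\emptyset}_{:,s},
\end{equation*}
where $\breve{F}^{\emptyset}$ is $F^{\emptyset}$ with each column rescaled by its diagonal entry, and taking the $\argmax$ over $s\in\mathcal{V}\setminus\{n\}$ yields the claimed formula. The main obstacle is the first-entrance identity $F^{\emptyset}_{is}=\Pr_i(\tau_s<\tau_n)F^{\emptyset}_{ss}$ together with the observation that switching $s$ from transient to absorbing does not alter the distribution of $\tau_s$; once that is justified, turning a costly $n{-}1$-inversion search into a single inversion followed by a column-sum is immediate.
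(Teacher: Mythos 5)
Your proposal is correct, and it reaches the paper's key identity $Q^{\{s\}}_{is}=F^{\emptyset}_{is}/F^{\emptyset}_{ss}$ (Lemma~3 in the supplementary material, specialized to $\mathcal{S}=\emptyset$) by a genuinely different route. The paper proves this identity algebraically: it first derives a Schur-complement update rule $F^{\mathcal{S}\cup\{s\}}_{ij}=F^{\mathcal{S}}_{ij}-F^{\mathcal{S}}_{is}F^{\mathcal{S}}_{sj}/F^{\mathcal{S}}_{ss}$ (Lemma~1) and the identity $F^{\mathcal{S}}=I+F^{\mathcal{S}}R^{\mathcal{S}}$ (Lemma~2), then expands $Q^{\mathcal{S}\cup\{s\}}_{is}=\sum_j F^{\mathcal{S}\cup\{s\}}_{ij}B^{\mathcal{S}\cup\{s\}}_{js}$ and simplifies. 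You instead argue probabilistically: the first-entrance/strong-Markov decomposition $F^{\emptyset}_{is}=\Pr_i(\tau_s<\tau_n)\,F^{\emptyset}_{ss}$ together with the observation that making $s$ absorbing does not alter the walk before it first hits $s$, so that $Q^{\{s\}}_{is}=\Pr_i(\tau_s<\tau_n)$. Your argument is more transparent about \emph{why} the identity holds (it is the classical expression of a hitting probability in terms of the Green's function of the chain), and it avoids matrix algebra entirely; it is valid here because each interior node transitions to the bias node with probability $\beta_i>0$, so the chain is absorbing and $F^{\emptyset}$ is finite. What the paper's algebraic route buys is that Lemma~1 is not just a stepping stone to Theorem~1: the rank-one Schur update is reused directly in \textsc{C2Greedy} (step~9 of Algorithm~1) to maintain $F^{\mathcal{S}_k}$ across greedy iterations without further inversions, and the general form of Lemma~3 (for arbitrary $\mathcal{S}$) underlies the closed-form marginal-gain selection in equation~(\ref{eq:k+1}). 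Your strong-Markov argument generalizes to that setting as well (replace ``absorption at $n$'' by ``absorption at $\mathcal{S}\cup\{n\}$''), so nothing is lost, but you would still need something like Lemma~1 separately to make the full algorithm inversion-free.
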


\subsection{INFLUENCE MAXIMIZATION FOR $K>1$}
\vspace*{-0.1in}
Although the influence maximization can be solved optimally for $K=1$ , the general problem (\ref{eq:opt}) under HC for $K>1$ is NP-hard:
\begin{theorem}
Given a network $G = (\mathcal{V},\mathcal{E})$ and a seed set $\mathcal{S}\subseteq \mathcal{V}$, influence maximization for \textit{infinite time horizon} (\ref{eq:opt}) under HC defined by (\ref{eq:binHC}) is NP-hard.
\end{theorem}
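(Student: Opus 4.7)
The plan is to establish NP-hardness via a polynomial-time reduction from a canonical NP-complete problem such as Vertex Cover. Given an instance $(G' = (V', E'), k)$ of Vertex Cover, the goal is to construct an HC instance whose optimal $k$-seed set encodes a vertex cover of $G'$, so that the decision version of (\ref{eq:opt}) reduces to the decision version of Vertex Cover.

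The HC graph would be built as follows. Introduce a ``vertex node'' $v \in \mathcal{V}$ for each $v' \in V'$ together with a single bias node $n$. For each edge $e' = (u',v') \in E'$, attach an ``edge-witness'' gadget consisting of one or more interior nodes whose outgoing edges lead, directly or indirectly, to the endpoint nodes $u$ and $v$. The non-social parameters are tuned so that an un-seeded vertex node $v$ is effectively a leak into the bias (by setting $\beta_v$ close to $1$, so $v$'s random walk is quickly absorbed by $n$), whereas each edge-witness is absorbed by the seed set precisely when at least one of $\{u,v\}$ is in $\mathcal{S}$. Plugging this absorbing-chain structure into the closed form $\sigma(\mathcal{S},\infty) = |\mathcal{S}| + \sum_{i\in\mathcal{I}}\sum_{s\in\mathcal{S}} Q^{\mathcal{S}}_{is}$ from (\ref{eq:generalInf}) yields an expression for the spread as a sum of per-edge contributions, from which an explicit threshold $T$ can be chosen so that $\sigma(\mathcal{S},\infty) \ge T$ iff $\mathcal{S}$ is a vertex cover of $G'$. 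Since the construction is polynomial in $|V'|+|E'|$ and $T$ has polynomial bit length, NP-hardness of HC influence maximization follows.

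The hardest part is engineering the edge-witness gadget so that its contribution to $\sigma$ is a clean binary signal---zero if neither endpoint is seeded and a fixed positive constant if at least one is---rather than a fractional mixture that depends continuously on which endpoints are seeded. A naive witness with symmetric weights $1/2$ to each endpoint gives $\sigma(\mathcal{S}) = k + \tfrac{1}{2}\sum_{v\in\mathcal{S}} \deg(v)$, which reduces to picking top-$k$ vertices by degree and is trivially polynomial-time, so the construction must break this additive decomposition. The fix is either amplification (replicating each witness polynomially many times so the integer gap between covered and uncovered edges dominates residual fractional terms) or routing each witness through an intermediate chain so that the random walk deterministically reaches a seeded endpoint whenever one exists. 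Verifying that the resulting gap between vertex-cover and non-vertex-cover $\mathcal{S}$'s is strictly positive, and that all $\beta_i$ and $\omega_{ij}$ stay within the admissible ranges required by (\ref{eq:binHC}), finishes the reduction.
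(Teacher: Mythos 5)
Your high-level skeleton --- reduce from Vertex Cover, build an HC instance, and pick a threshold $T$ so that $\sigma(\mathcal{S},\infty)\geq T$ iff $\mathcal{S}$ is a cover --- matches the paper's, but the proposal has a genuine gap: the one component that carries all the difficulty, the edge-witness gadget, is never constructed, and neither of the two fixes you sketch overcomes the obstacle you yourself identify. You correctly observe that a symmetric witness makes each edge contribute $\tfrac12\bigl([u\in\mathcal{S}]+[v\in\mathcal{S}]\bigr)$, so the spread is a modular (linear) function of the seed indicators and the optimum is just the top-$k$ degrees. But replicating each witness polynomially many times only multiplies that linear function by a constant; amplification cannot ``break the additive decomposition.'' The chain-routing idea fails for a different reason: the transition probabilities of the absorbing chain are fixed by the construction and cannot depend on which endpoints happen to be seeded, so a random walk cannot ``deterministically reach a seeded endpoint whenever one exists.'' If instead you let unseeded vertex nodes bounce the walk back into the gadget with a tiny leak to the bias node, the walk is absorbed by $\mathcal{S}$ with probability close to $1$ whenever $\mathcal{S}$ is nonempty and reachable --- that detects connectivity, not coverage. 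As written, the reduction does not go through.

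The paper's proof avoids gadgets entirely: it runs HC on the Vertex Cover instance $G$ itself, with $b=0$, a uniform $\beta_i=\beta$, and the original edges of $G$. Because every interior node sends weight $\beta$ to the zero-valued bias node, no non-seed node can exceed the value $1-\beta$, and a non-seed node attains exactly $1-\beta$ precisely when \emph{all} of its neighbors are seeds. The key combinatorial fact is that ``every non-seed node has all its neighbors in $\mathcal{S}$'' is equivalent to ``$\mathcal{S}$ is a vertex cover.'' Hence $\sigma(\mathcal{A},\infty)=k+(n-k)(1-\beta)$ is attainable iff $G$ has a vertex cover of size $k$, which is exactly your threshold argument with $T=k+(n-k)(1-\beta)$. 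The signal is per-node saturation at the cap $1-\beta$ rather than a per-edge contribution, and that is what lets the reduction escape the linear, degree-counting trap; any repair of your construction would similarly need a gadget whose contribution is genuinely non-modular in $\bigl([u\in\mathcal{S}],[v\in\mathcal{S}]\bigr)$.
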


In spite of being NP-hard, we show that the influence spread  $\sigma(\mathcal{S},\infty)$ is \emph{submodular} in the seed set $\mathcal{S}$ which enables us to find a provable near-optimal greedy solution. A set function $f : 2^\mathcal{V} \rightarrow \Real$ maps subsets of a finite set $\mathcal{V}$ to the real numbers and is submodular
if for $\mathcal{T} \subseteq \mathcal{S} \subseteq \mathcal{V}$ and $s \in \mathcal{V} \setminus \mathcal{S}$, $f(\mathcal{T}\cup\{s\})- f(\mathcal{T}) \geq f(\mathcal{S} \cup \{s\})- f(\mathcal{S})$ holds, which is the diminishing return property. Following theorem presents our established submodularity results.
\begin{theorem}
Given a network $G = (\mathcal{V},\mathcal{E})$, influence spread $\sigma(\mathcal{S},\infty)$ under HC model is non-negative monotone submodular function.
\end{theorem}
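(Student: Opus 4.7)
The plan is to give a random-walk interpretation of $\sigma(\mathcal{S},\infty)$ that exposes it as an expected weighted coverage function, and then invoke the standard fact that such functions are monotone submodular. Non-negativity is immediate from (\ref{eq:generalInf}), since $|\mathcal{S}|$ is non-negative and every entry of the absorption matrix $Q^{\mathcal{S}}$ is a probability.

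For monotonicity and submodularity, I would first recall from (\ref{eq:vQv}) that, with $b=0$, the steady-state value $v(i) = \sum_{s \in \mathcal{S}} Q^{\mathcal{S}}_{is}$ is the probability that a random walk driven by $P$ and started at an interior node $i$ is absorbed at some seed before being absorbed at the bias node $n$. Because $\beta_i > 0$ for every $i$, the walk reaches $n$ in finite time almost surely, so this probability is well defined. The key observation is that this event depends only on the \emph{unmodified} walk, i.e.\ the one in which only $n$ is absorbing: what the trajectory does after first touching $\mathcal{S}$ is irrelevant. Letting $A_i$ denote the random set of nodes visited by this unmodified walk from $i$ before its absorption at $n$ (with the convention $i \in A_i$), I can rewrite (\ref{eq:generalInf}) uniformly for all $i \in \mathcal{V} \setminus \{n\}$ as
\begin{equation*}
\sigma(\mathcal{S},\infty) \;=\; \sum_{i \in \mathcal{V} \setminus \{n\}} \Pr\bigl[A_i \cap \mathcal{S} \neq \emptyset\bigr],
\end{equation*}
where the $|\mathcal{S}|$ term is absorbed into the sum because $i \in \mathcal{S}$ forces the corresponding indicator to equal $1$.

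Monotonicity and submodularity now follow path by path. For any fixed realization $\omega$ of the joint family $\{A_i(\omega)\}_i$, the set function $\mathcal{S} \mapsto \sum_i \mathbf{1}[A_i(\omega) \cap \mathcal{S} \neq \emptyset]$ is a weighted coverage function: its marginal gain from adding an element $s$ to a set $\mathcal{T}$ equals $|\{i : s \in A_i(\omega),\ A_i(\omega) \cap \mathcal{T} = \emptyset\}|$, which is non-negative (giving monotonicity) and non-increasing in $\mathcal{T}$ (giving submodularity). Taking expectation over $\omega$ preserves both properties, so $\sigma(\cdot,\infty)$ inherits them. The one subtle step I expect to be the main obstacle is the identification of $\sum_{s \in \mathcal{S}} Q^{\mathcal{S}}_{is}$ with $\Pr[A_i \cap \mathcal{S} \neq \emptyset]$ computed on the unmodified walk; this is a strong-Markov argument at the first hitting time of $\mathcal{S}$, made legitimate precisely because $\beta_i > 0$ ensures the hitting time of $n$ is finite almost surely. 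Once this bridge is in place, the rest is a textbook coverage argument requiring no further combinatorial work.
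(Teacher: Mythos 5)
Your proof is correct, but it takes a genuinely different route from the paper's. The paper proves submodularity of each node value $v(i)$ via the superposition principle for the linear (harmonic) system: the marginal gain ${v}^{\mathcal{T}\cup\{s\}}(i)-{v}^{\mathcal{T}}(i)$ is re-expressed as the value of $i$ in a modified network in which the old seeds $\mathcal{T}$ are grounded and $s$ is the sole source with value $1-v^{\mathcal{T}}(s)$, and the diminishing-returns inequality is then obtained by comparing two such single-source networks with nested ground sets. You instead identify $v(i)=\sum_{s\in\mathcal{S}}Q^{\mathcal{S}}_{is}$ with $\Pr[T_{\mathcal{S}}<T_n]$ for the unmodified walk (only $n$ absorbing), which is legitimate exactly as you say: the law of the trajectory up to the first hitting time of $\mathcal{S}\cup\{n\}$ does not depend on which states are declared absorbing, and $\beta_i>0$ guarantees $T_n<\infty$ a.s.\ so the visited set $A_i$ is well defined. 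This turns $\sigma(\cdot,\infty)$ into an expected coverage function over the $\mathcal{S}$-independent random sets $A_i$, and monotonicity and submodularity follow realization by realization -- the same ``reachability set'' device Kempe et al.\ use for IC/LT, here with absorption at the bias node playing the role of the cutoff. Your argument is cleaner, handles the $\mathcal{S}$-dependence of the interior set $\mathcal{I}$ transparently by summing over all of $\mathcal{V}\setminus\{n\}$, and gives non-negativity and monotonicity for free; its one limitation is that the identification of $v(i)$ with a hitting probability requires the boundary values to be exactly $1$ on seeds and $0$ on the bias node, whereas the paper's superposition argument is stated for arbitrary seed values and general ground sets (and reuses machinery the paper already needs for its closed-form greedy step). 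Since the theorem concerns the standard HC setting with unit seeds and $b=0$, your proof fully establishes the claim.
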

The greedy solution adds nodes to the seed set $\mathcal{S}$ sequentially and maximizes a monotone submodular function with $(1 - 1/e)$ factor approximation guarantee \cite{nemhauser_analysis_1978}. More formally the $(k+1)$-th seed is the node with maximum \textbf{marginal gain}: $(k+1)\text{th-} MIP_t = \argmax_{s \in \mathcal{V} \setminus \{\mathcal{S}_{k} \cup \{n\}\}} \sigma(\mathcal{S}_{k}\cup \{s\},t)-\sigma(\mathcal{S}_{k},t)$, where $\mathcal{S}_{k}$ is the set of $k$ seeds which have been picked already. 
Although we can compute the above objective function in closed form, for selecting the next seed we have to test all $s$ to solve the problem which is the approach of all existing greedy based method in the literature. Previously a lazy greedy scheme have been introduced to reduce the number testing candidate nodes $s$ \cite{leskovec_cost-effective_2007}. In the next section we go one step further and show that under HC model and for \textit{infinite time horizon} we can solve the marginal gain in \emph{closed form}. 

\subsection{GREEDY SELECTION}\label{sub:greedySel}
\vspace*{-0.1in}
An important characteristic of the linear systems, like HC when $t \rightarrow \infty$, is the ``superposition" principle. We leverage this principle to calculate the marginal gain of the nodes efficiently and pick the one with maximum gain for the greedy algorithm. 
Based on this principle, the value of each node in HC for infinite time, and for a given seed set $\mathcal{S}$, is equal to the algebraic sum of the values caused by each seed acting alone, while all other values of seeds have been kept zero. 
Therefore, when a node $s$ is added to the seed set $\mathcal{S}_k$, its marginal gain can be calculated as the summation of values of the nodes when all of the values of $\mathcal{S}_k$ have been turned to zero and node $s$ is the only seed in the network, whose value is $1-v^{\mathcal{S}_k}(s)$. 
In this new problem, the vector of boundary values $v_{\mathcal{B}}^{\mathcal{S}_k \cup \{s\}}$ is a vector of all 0's except the entry corresponding to the node $s$ with value $1-v^{\mathcal{S}_k}(s)$, and the value of interior node $i$ is obtained from (\ref{eq:vQv}):
\begin{equation} \label{eq:vgreedy}
v_\mathcal{I}^{\mathcal{S}_k \cup \{s\}}(i)=Q_{is}^{\mathcal{S}_k \cup \{s\}}(1-v^{\mathcal{S}_k}(s)) \nonumber
\end{equation}
Substituting $Q$ from lemma 3 result (see Supplementary), the $k+1$-th seed is determined from the following closed form equation:
\begin{eqnarray} \label{eq:k+1}
&&(k+1)\text{th-} MIP \nonumber
\\&=&\argmax_{s \in \mathcal{V} \setminus \{\mathcal{S}_{k}\cup\{n\}\}} \sum_{i \in \mathcal{V} \setminus \{\mathcal{S}_{k}\cup\{n\}\}} \frac{{F_{is}^{\mathcal{S}_{k}}}}{F_{ss}^{\mathcal{S}_{k}}}\big(1-v^{\mathcal{S}_{k}}(s)\big), \nonumber
\\&=& \argmax (\mathbf{1}-v^{\mathcal{S}_{k}})'\breve{F}^{\mathcal{S}_{k}} 
\end{eqnarray}
Note that vector $v^{\mathcal{S}_{k}}$ is obtained in step $k$ and is known, and matrix $F^{\mathcal{S}_k}$ can be calculated from $F^{\mathcal{S}_{k-1}}$ without any need for matrix inversion (see Supplementary, lemma 1). 
One may observe that equation (\ref{eq:k+1}) is the general form of Theorem 1, since $v^{\mathcal{S}_0}=v^{\emptyset}= {0}$. Notice that equation (\ref{eq:k+1}) intuitively uses two criteria for selecting the new seed: its current value should be far from 1 (higher value for $(1-v^{\mathcal{S}_k}(s))$ term) which suggests that it is far from the previously selected seeds, and at the same time it should  have a high network centrality (corresponding to the ${F_{is}^{\mathcal{S}_{k}}}/F_{ss}^{\mathcal{S}_{k}}$ term). Algorithm \ref{alg1} summarizes our {\sc C2Greedy} method for $t \to \infty$: a greedy algorithm with 2 closed form steps. Operator $\otimes$ in step 10 denotes the Hadamard product.

\begin{algorithm}                      
	\caption{ \sc C2Greedy }
	\label{alg1}                           
	\begin{algorithmic}[1]
	\small                    
		\STATE {\bfseries input:} extended directed network $G=(\mathcal{V},\mathcal{E})$ with bias node $n$, maximum budget $K$.
		\STATE {\bfseries output:} seed set $\mathcal{S}_K \subseteq \mathcal{V}$ with cardinality $K$.
		\STATE compute matrix $P$ from $G$.
		\STATE $\mathcal{S}_0 := \emptyset$
		\STATE $F^{\mathcal{S}_0} := (I - P^{\mathcal{S}_0})^{-1}$
		\STATE $s = \argmax  \mathbf{1'} \breve{F}^{\emptyset}$, and $\mathcal{S}_1 = \mathcal{S}_0 \cup \{s\}$
		\STATE $v^{\mathcal{S}_{1}}=\breve{F}^{\mathcal{S}_{0}}(:,s)$
		\FOR {$k = 1$ to $K-1$}			
			\STATE $\forall i, j \in \mathcal{I}: F_{ij}^{\mathcal{S}_k \cup \{s\}}=F^{\mathcal{S}_k}_{ij}-\frac{F^{\mathcal{S}_k}_{is}F^{\mathcal{S}_k}_{sj}}{F^{\mathcal{S}_k}_{ss}}$
			\STATE $s = \argmax (\mathbf{1}-v^{\mathcal{S}_{k}})' \otimes\mathbf{1'}\breve{F}^{\mathcal{S}_{k}}$, and $\mathcal{S}_{k+1} = \mathcal{S}_k \cup \{s\}$
            \STATE $v^{\mathcal{S}_{k+1}}=v^{\mathcal{S}_{k}}+(\mathbf{1}-v^{\mathcal{S}_{k}}(s))\breve{F}^{\mathcal{S}_{k}}(:,s)$
		\ENDFOR
	\end{algorithmic}
\end{algorithm}
\vspace*{-0.1in}

\begin{table*}[t]
\centering	
\caption{\small Specifying the equal heat system for existing non-progressive influence models.}
\label{tb:discussion}
\vskip 0in
\begin{center}
\begin{small}
\begin{tabular}{| c | c | c || c | c | c | c || c |}
\cline{1-8}
\multicolumn{1}{ |c| }{\multirow{2}{*}{Model}} & \multicolumn{1}{ |c| }{\multirow{2}{*}{\specialcell{Non-Social \\ influence}}}  & \multicolumn{1}{ |c|| }{\multirow{2}{*}{Weighted edges}} & \multicolumn{2}{|c|}{Boundary} & \multicolumn{2}{|c||}{Init. Cond.} & \multicolumn{1}{|c|}{\multirow{2}{*}{\specialcell{Equivalent Physical \\ Heat Conduction System}}}  \\ 
\cline{4-7}
& & & High $T = 1$ & Low $T < 1$ & $=0$ & $\neq 0$ &  \\
\cline{1-8}
NLT1 & $\surd$ & $\surd$ & & $\surd$ & $\surd$ &  & \specialcell{Circular ring with a fixed-temp. point}\\
\cline{1-8}
NLT2 & $\surd$ & $\surd$ & $\surd$ & $\surd$ & $\surd$ & & \specialcell{A rod with fixed-temp.\\ends, one high one low}\\
\cline{1-8}
NLT3 & & $\surd$ & & & $\surd$ &  & \specialcell{(Isolated) circular ring}\\
\cline{1-8}
NLT4 & & $\surd$ & $\surd$ & & $\surd$ & &  \specialcell{Circular ring with a fixed-temp. point}\\
\cline{1-8}
Voter & & & & & & $\surd$ & \specialcell{(Isolated) circular ring} \\
\cline{1-8}
GLT & $\surd$ & $\surd$ & & $\surd$ & & &  \specialcell{Circular ring with a fixed-temp. point}\\
\cline{1-8}
\end{tabular}
\end{small}
\end{center}
\vskip -0.25in
\end{table*}

\vspace*{-2pt}
\section{DISCUSSION}\label{sec:discussion}
\vspace*{-0.05in}

In this section, we present the comprehensive view of HC model and elaborate its (unifying) relation to the other models by providing multiple interpretations.

{\bf Social interpretation. }
HC can be simply extended to model many real cases that the other influence models fail to cover.
As briefly mentioned in Section \ref{ourModel}, the original HC (\ref{eq:binHC}), models both ``social" and ``non-social" influences which cover the observations from the real datasets \cite{cha_measurement-driven_2009}. The extension of HC which is more flexible in modeling real world cascades is as follows:
\begin{equation}
\label{eq:rw1}
\small
u(i, t+1) = m\alpha_i + r\gamma_i + (1 - \gamma_i-\alpha_i) \sum_{j \in \mathcal{N}(i)} \omega_{ij}u(j, t),
\end{equation}
where, $\sum_{j \in \mathcal{N}(i)} \omega_{ij} = 1$, $\gamma_i,\alpha_i \in [0,1]$, $m=1$, and $r=0$. 
Factor $r$ models the ``discouraging" factor like intrinsic \emph{reluctance} of customers toward a new product, and $m$ represents ``encouraging" factor like \emph{media} that promotes the new product. These two factors can explain cases where all neighbors of a node are active but the node remains inactive, or when a node becomes active while none of her neighbors are active \cite{cha_measurement-driven_2009}. Note that all of the formulas and results stated so far is simply applicable to the general HC model (\ref{eq:rw1}).

{\bf Unification of existing non-progressive models. }
HC (\ref{eq:binHC}) unifies and extends many of the existing non-progressive models. 
In the Voter model, a node updates its choice at each time step by picking one of its neighbors randomly and adopting its choice. 
In other words, the choice rule of node $i$ is the ratio of the number of her active neighbors to her total number of neighbors. 
Thus, Voter's choice rule is the simplified form of HC's choice rule (\ref{eq:binHC}) where  $\omega_{ij}$ is equal to $\frac{1}{d_i}$ ($d_i$ is the out-degree of node $i$) and all $\beta_i$s are set to zero. Also, note that having $\beta_i=0$ indicates that the Voter does not cover the ``non-social" influence. 
 
In the \emph{non-progressive} LT (NLT) \cite{kempe_maximizing_2003}, each node is assigned a random threshold $\theta$ \emph{at each time step} and becomes active if the weighted number of its active neighbors (at previous time step) becomes larger than its threshold: $\sum_{j \in \mathcal{N}(i)} \omega_{ij}\delta_j(t) \geq \theta_i(t+1)$, where the edge weights satisfy $\sum_{j \in \mathcal{N}(i)}\omega_{ij} \leq 1$. 
Thus, the choice rule of node $i$ at time $(t+1)$ under the NLT is obtained from the following equation:
\begin{eqnarray} \label{eq:NLT}
Pr\big(\delta_{i}(t+1) = 1 | \mathcal{N}(i)\big) &=&Pr\big(\theta_i(t+1) \leq \Sigma \omega_{ij}^{\text{\tiny{NLT}}} \delta_{j}(t)\big) \nonumber
\\&=& \Sigma \omega_{ij}^{\text{\tiny{NLT}}} \delta_{j}(t), 
\end{eqnarray}
where the second equality is the result of sampling $\theta_i(t+1)$ from the \textit{uniform distribution} $U(0,1)$. Equation (\ref{eq:NLT}) is the simplified form of HC's choice rule   (\ref{eq:binHC}), where $b=0$ and $(1-\beta_i)\omega_{ij}^{\text{\tiny{HC}}}=\omega_{ij}^{\text{\tiny{NLT}}}$. Note that since in the NLT $b$ accepts only zero value, this influence model also cannot cover \textit{encouraging} ``non-social" influence. Moreover, if the edge weights' gap in NLT, i.e. $g_i=1-\sum_{j \in \mathcal{N}(i)}\omega_{ij}^{\text{\tiny{NLT}}}$, is zero for all the nodes, it cannot model the ``non-social" influence at all, since the corresponding $\beta_i$'s in (\ref{eq:binHC}) would be equal to zero in that case.

Generalized linear threshold (GLT) is another non-progressive model proposed in \cite{pathak_generalized_2010} to model the adoption process of \textit{multiple} products. Assigning a color $c \in \mathcal{C}$ to each product, a node updates its color, at each time step, by randomly picking one of its neighbors based on its edge weights and adopts the selected neighbor's color. For binary case $|\mathcal{C}|=2$, where we only distinct between adoption of a desired product (active) and the rest of products (inactive), GLT's choice rule reduces to the following equation: $Pr\big(\delta_{i}(t+1) = 1 | \mathcal{N}(i)\big) = \frac{\beta}{2} + (1-\beta) \sum_{j \in \mathcal{N}(i)}\omega_{ij} \delta_{j}(t)$.
It is easy to see that this is the restricted form of HC's choice rule  (\ref{eq:binHC}), where nodes are all connected to the bias node with equal weight of $\beta$ and bias value $b$  has to be $\frac{\beta}{2}$.

{\bf Physical interpretation. }
We showed that the existing non-progressive models are special cases of HC, and in this part we describe their equal heat conduction system which are uniquely specified by the initial and boundary conditions. 
Table \ref{tb:discussion} summarizes the heat interpretation of the influence models. 
We introduce four variants of non-progressive LT, based on two factors: seed and gap $g_i$. NLT1 and NLT2 support non-zero gaps, and NLT2 and NLT4 allows seeds, i.e. nodes in the network that always remain active. 
The non-progressive LT model presented in \cite{kempe _maximizing_2003} is equivalent to NLT2. 
Reluctance factor and seeds in all models are equivalent to the low and high temperature boundaries respectively, and initial condition addresses the interiors' initial values ($z$ in (\ref{eq:completeEq})). The non-social influence and edge weights factors appear in the Laplacian matrix calculation of (\ref{eq:completeEq}). 
The equivalent physical heat conduction systems are easy to understand, here we just briefly point out the equivalence of the Voter model and the isolated circular ring. Circular ring is a rod whose ends are connected to each other and do not have any energy exchange with outside \cite{incropera2011fundamentals} which explains why the Voter conserves the total initial heat energy, and reaches to an equilibrium with an equal temperature for all of the nodes, i.e., consensus. 

{\bf Random walk interpretation. }
Beside the heat conduction view, the random walk prospect helps to gain a better understanding of the models and their relations. 
Assume that active and inactive nodes are colored black and white respectively. 
Consider the \emph{original view} of any influence model which is the actual process that unfolds in time, so we look at the time-forward direction. 
We take a snapshot of the colored network at each time step $t$. 
Putting together the sequence of snapshots, the result is a random walk in the ``colored graphs'' state space with $2^n$ states. 
On the other hand, the \emph{dual view} looks at the time-reverse direction of influence models. 
It is known for both IC-based models (like IC \cite{kempe _maximizing_2003} and ConTinEst \cite{due_scalable_2013}) and LT-based models (Table \ref{tb:discussion} as well as HC and LT)  that a single node from $\mathcal{N}(i)$ is responsible for $i$'s color switch, which we name it as the parent of $i$. 
Now assuming that the process has advanced up to the time $t$, we reverse the process by starting from each node $i$ and follow its ancestors. 
Here is the point where IC and LT based models separate from each other: due to $\sum_{j \in \mathcal{N}(i)} \omega_{ij} \leq 1$ constraint, ancestors of $i$ in the LT-based models form a random walk starting from node $i$, which is not the case in IC-based models. 
Note that we have $n$ random walks that can meet and merge, thus they are known as \emph{coalescing random walks} \cite{aldous_reversible_2002}. 
This view also helps us to demonstrate the essential difference between progressive and non-progressive models. 
Dual view of progressive LT model is a \emph{coalescing self-avoiding walks} which is the outcome of randomizing the threshold $\theta$ only once at the beginning of the process for the nodes in each realization. 
This bounds the number of ``live" edges \cite{kempe_maximizing_2003} connected to each node by one which prevents the creation of ``loop" in the influence paths. Note that both counting and finding the probability of self-avoiding walks are $\#P$ hard \cite{chen_scalable_2010}.

\vspace*{-5pt}
\section{EXPERIMENTS} \label{sec:expr}
\vspace*{-0.05in}

In this section, we examine several aspects of \textsc{C2Greedy} and compare it with state-of-the-art methods. Experiments mainly focus on influence maximization and timing aspects. Finally, we present one example of real non-progressive data and illustrate the result of \textsc{C2Greedy}.

\subsection{DATASET}
\vspace*{-0.1in}
Table \ref{tb:netList} summarizes the statistics of the networks that we use throughout the experiments. We work with both synthetic and real networks which we briefly discuss next.

{\bf Synthetic network generation.}
We consider the following types of Kronecker network for extensive performance comparison of our method with the state-of-the-art methods: random \cite{erdos_evolution_1960} (parameter matrix $[0.5, 0.5; 0.5, 0.5]$), hierarchical \cite{clauset_hierarchical_2008} ($[0.9, 0.1; 0.1; 0.9]$), and core-periphery \cite{leskovec_kronecker_2010} ($[0.9, 0.5; 0.5, 0.3]$). 
We generate 10 samples from each network and report the average performance of each method. 
Edge weights are drawn uniformly at random from $[0,1]$ and weights of each node's outgoing edges is normalized to 1. 
For timing experiment, we use ForestFire \cite{clauset_hierarchical_2008} (Scale-free) network with forward and backward burning probability of 0.35 and 0.25, respectively, and set the outgoing edge weights of node $i$ to $1/|\mathcal{N}(i)|$.
The expected density, i.e., number of edges per node, for the resulted ForestFire networks is 2.5.

\begin{table}[t]
\vskip -0.3in
\centering	
\caption{\small List of networks used in experiments.}
\label{tb:netList}
\begin{center}
\begin{small}
\vskip -0.1in
\begin{tabular}{c c | c | c | c |}
\cline{3-5}
& & $|\mathcal{V}|$ & $|\mathcal{E}|$  & Params \\ 
\cline{1-5}
\multicolumn{1}{ |c| }{\multirow{4}{*}{\specialcell{Synthetic \\ Networks}}} & Random &  1024 & - & {\tiny$[0.5, 0.5; 0.5, 0.5]$} \\
\cline{2-5}
\multicolumn{1}{ |c| }{} & Hier. & 1024 & - &  \tiny{$[0.9, 0.1; 0.1; 0.9]$}  \\
\cline{2-5}
\multicolumn{1}{ |c| }{} & Core. & 1024 & - & \tiny{$[0.9, 0.5; 0.5, 0.3]$}   \\
\cline{2-5}
\multicolumn{1}{ |c| }{} & ForestFire & $1$K-$300$K & $2.5 |\mathcal{V}|$ & \tiny{$[0.35, 0.25]$} \\
\cline{1-5}
\multicolumn{1}{ |c| }{\multirow{4}{*}{\specialcell{Real \\ Networks}}} &  KClub & 34 & 501 & -   \\
\cline{2-5}
\multicolumn{1}{ |c| }{}& PBlogs & 1490 & 19087 & -  \\
\cline{2-5}
\multicolumn{1}{ |c| }{}& WikiVote & 7115 & 103689 & - \\ 
\cline{2-5}
\multicolumn{1}{ |c| }{}& MLWFW & 10604 & 168918 & - \\
\cline{1-5}
\end{tabular}
\end{small}
\end{center}
\vskip -0.3in
\end{table}

{\bf Real Networks.}
Zachary's karate club network (KClub) is a small friendship network  with 34 nodes and 501 edges \cite{zachary_information_1977}. The political blogs network (PBlogs) \cite{adamic_political_2005}, is a moderate size directed network of hyperlinks between weblogs on US politics with 1490 nodes and 19087 edges. Wikipedia vote network (WikiVote), is the network
of who-vote-whom from wikipedia administrator elections
\cite{leskovec_predicting_2010} with 7115 nodes and 103689 edges. Finally, MLWFW is the network of who-follow-whom in the machine learning research community which we extract from citation networks of combined ACM and DBLP citation network which is available as a part of ArnetMiner  \cite{tang_arnetminer:_2008}. For more information about MLWFW refer to Section \ref{subsec:real}. 

For all synthetic and real networks, after constructing the network, we add the bias node to the network and connect all nodes to it with weight $\beta_i=0.1$ and re-normalize the weight of the other edges accordingly.

\subsection{INFLUENCE MAXIMIZATION}
\vspace*{-0.1in}
In this section we investigate the performance of \textsc{C2Greedy} in the main task of influence maximization i.e., solving the set function optimization (\ref{eq:opt}). Since finding the optimal solution for (\ref{eq:opt}) is NP-hard, we compare \textsc{C2Greedy} with optimal solution only for a small network, then for a large network we show that \textsc{C2Greedy} result is close to the online bound \cite{leskovec_cost-effective_2007}.  We also compare the performance of \textsc{C2Greedy} with the state-of-the-art methods proposed for solving (\ref{eq:opt}) under different (mostly progressive) influence models.

\textbf{\textsc{C2Greedy} vs. optimal.} 
For testing the quality of \textsc{C2Greedy} method, we compare its performance with the best seed set (determined by brute force) on a small size network. 
We work with the KClub network for the brute-force experiment with $K = 5$. 
As Figure \ref{fig:bruteForce} shows \textsc{C2Greedy} selects nodes that match the performance of the optimal seed set. 
In the next step, on a larger network, we show that the performance of \textsc{C2Greedy} is close to the known online upper bound \cite{leskovec_cost-effective_2007}.
We compute the online and offline bounds of greedy influence maximization \cite{leskovec_cost-effective_2007} with $K=30$ for PBlogs network. 
Figure \ref{fig:onlineBound} illustrates that \textsc{C2Greedy} result is close to the online bound and therefore close to the optimal solution's performance.

\begin{figure}
\vskip -0.1in
\begin{center}
\subfigure[C2Greedy vs. optimal.]{\label{fig:bruteForce}\includegraphics[scale=.21]{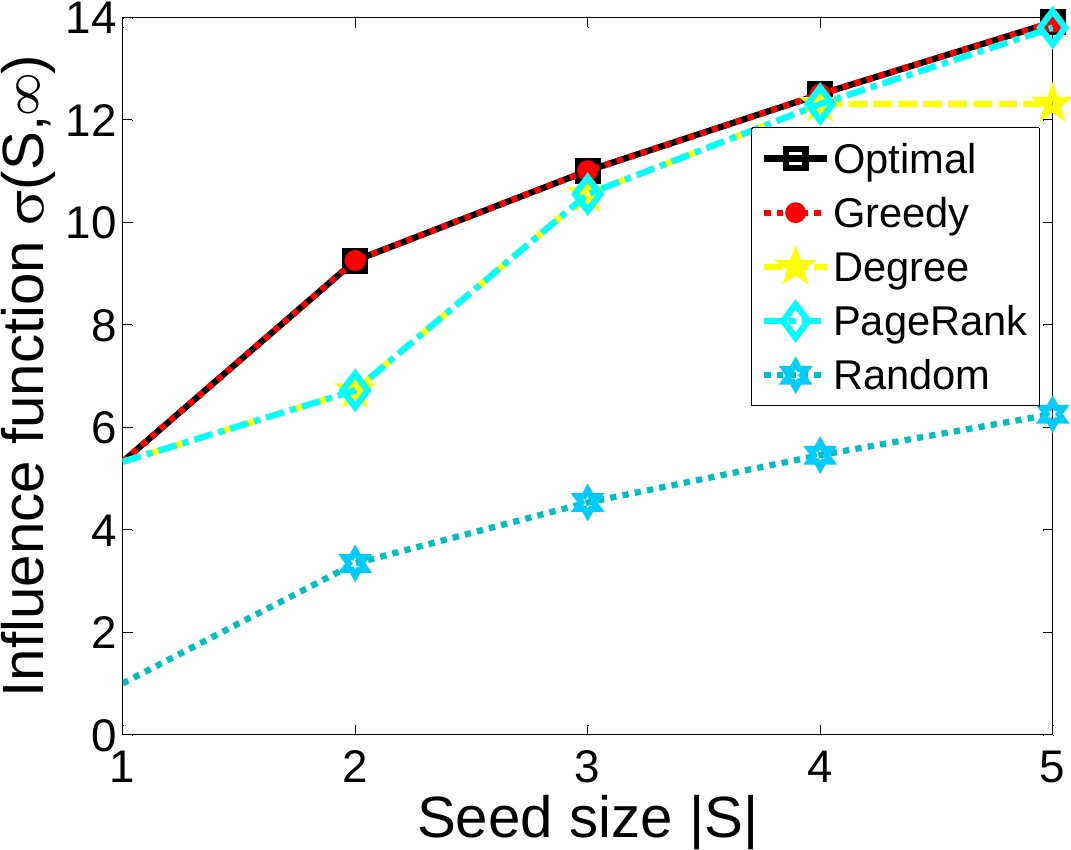}} ~
\subfigure[Online and offline bounds.]{\label{fig:onlineBound}\includegraphics[scale=.21]{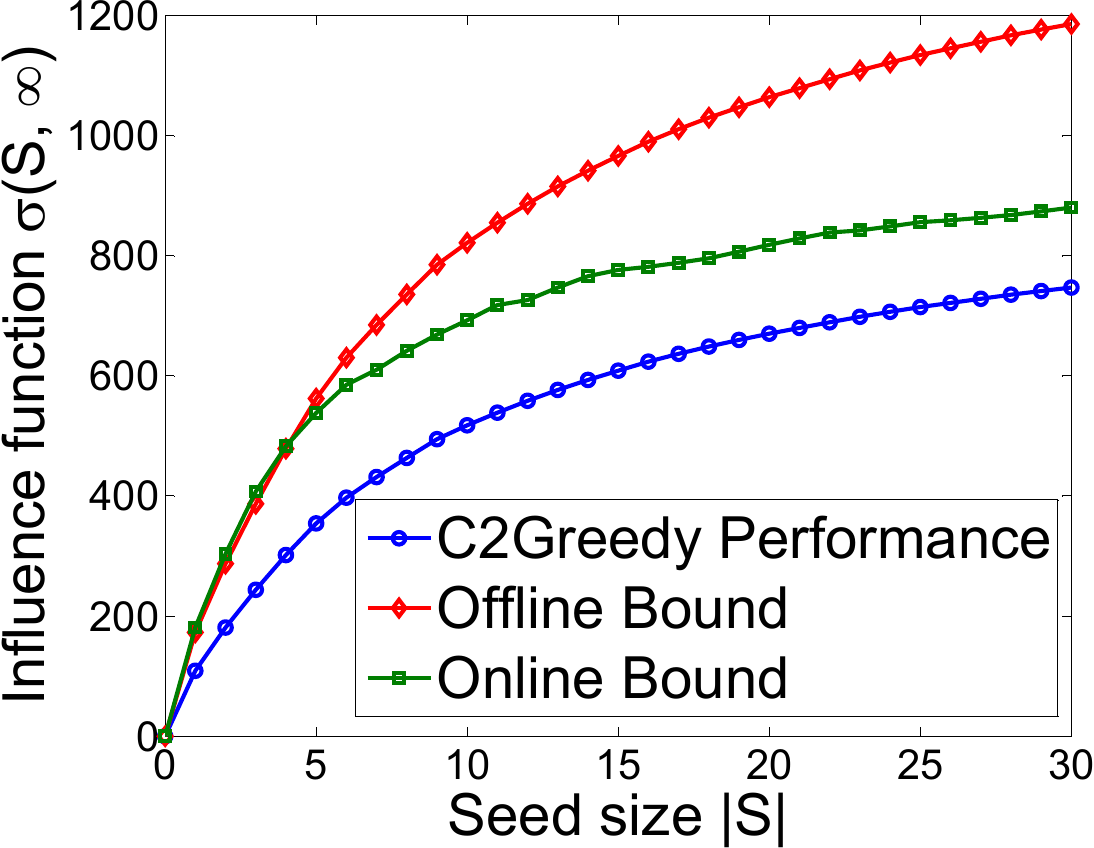}}
\vskip -0.1in
\caption{\small For small network (a) shows C2Greedy matches the optimal performance. For a larger network (b) compares performance of C2Greedy with online and offline bounds.}
\label{fig:optimal}
\end{center}
\vskip -0.2in
\end{figure}

\textbf{\textsc{C2Greedy} vs. state-of-the-art.}
Next, we compare \textsc{C2Greedy} with the state-of-the-art methods of influence maximization over three aforementioned synthetic networks and WikiVote real network. Among baseline methods PMIA \cite{chen_scalable_20101} and LDAG \cite{chen_scalable_2010} are approximation for IC and LT models respectively and SP1M \cite{kimura_tractable_2006} is a shortest-path based heuristic algorithm for influence maximization under IC. ConTinEst \cite{gomez-rodriguez_influence_2012} is a recent method for solving continuous time model of \cite{rodriguez_uncovering_2011} and PageRank is the well-known information retrieval algorithm \cite{brin_anatomy_1998}. Finally, Degree selects the nodes with highest degree as the most influential and Random picks the seed set randomly. 

The comparison results are depicted in Figure \ref{fig:compare}. Interestingly, our algorithm outperformed all of the baselines. Strangely, ConTinEst performs close to Random (except in the random network). 
A closer look at the results for three synthetic networks reveal that except ConTinEst's odd behavior all other methods have persistence rank in performance. \textsc{C2Greedy} is the best method and is followed by PMIA and LDAG, both in second place, which are closely followed by SP1M. PageRank, Degree and Random are next methods in order. In WikiVote real network of Figure \ref{fig:realCmp} surprisingly most of the state-of-the-art methods perform terribly poor and Degree (as the KMIP solution to Voter model) is the only competitor of \textsc{C2Greedy}. Result of experiment with WikiVote shows that most influential nodes in a progressive models are not necessary influential in non-progressive ones, and designing non-progressive-specific algorithms (like \textsc{C2Greedy}) is required for influence maximization under non-progressive models.

\begin{figure*}
\vskip -0.1in
\begin{center}
\subfigure[Random network]{\label{fig:randCmp}\includegraphics[scale=.21]{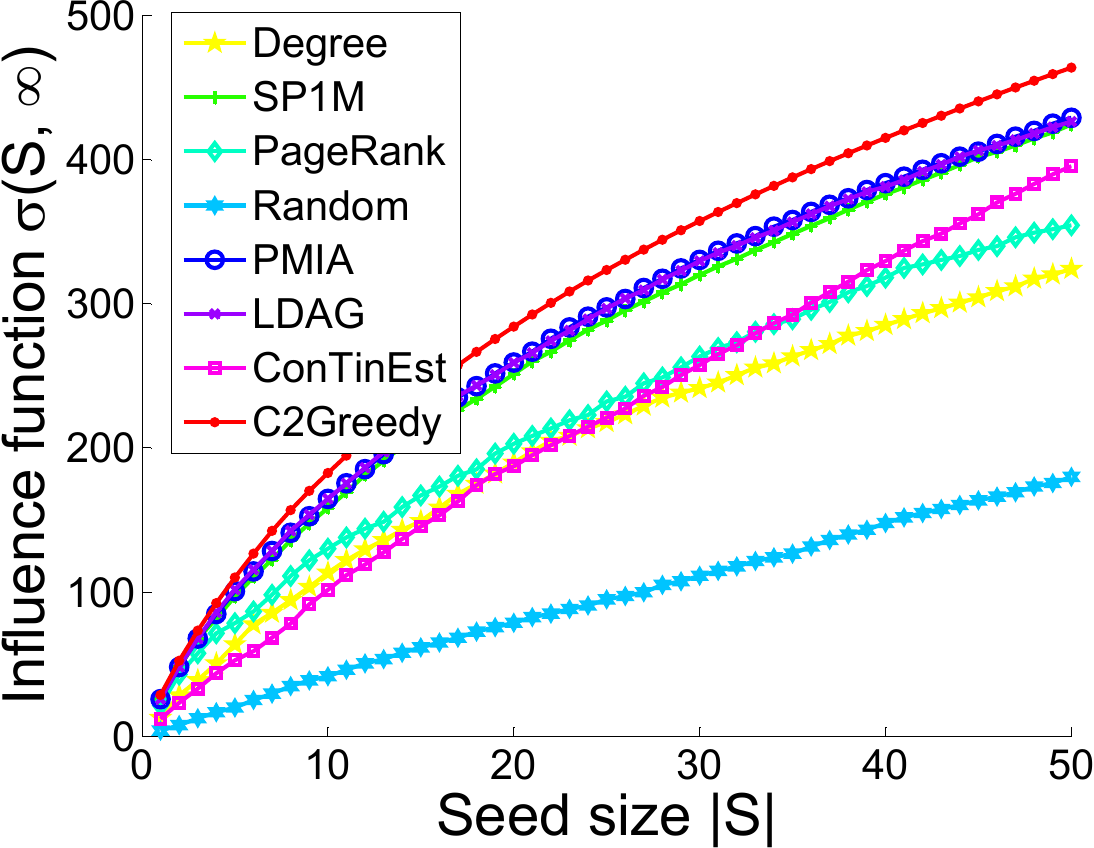}} ~
\subfigure[Hierarchical network]{\label{fig:hierCmp}\includegraphics[scale=.21]{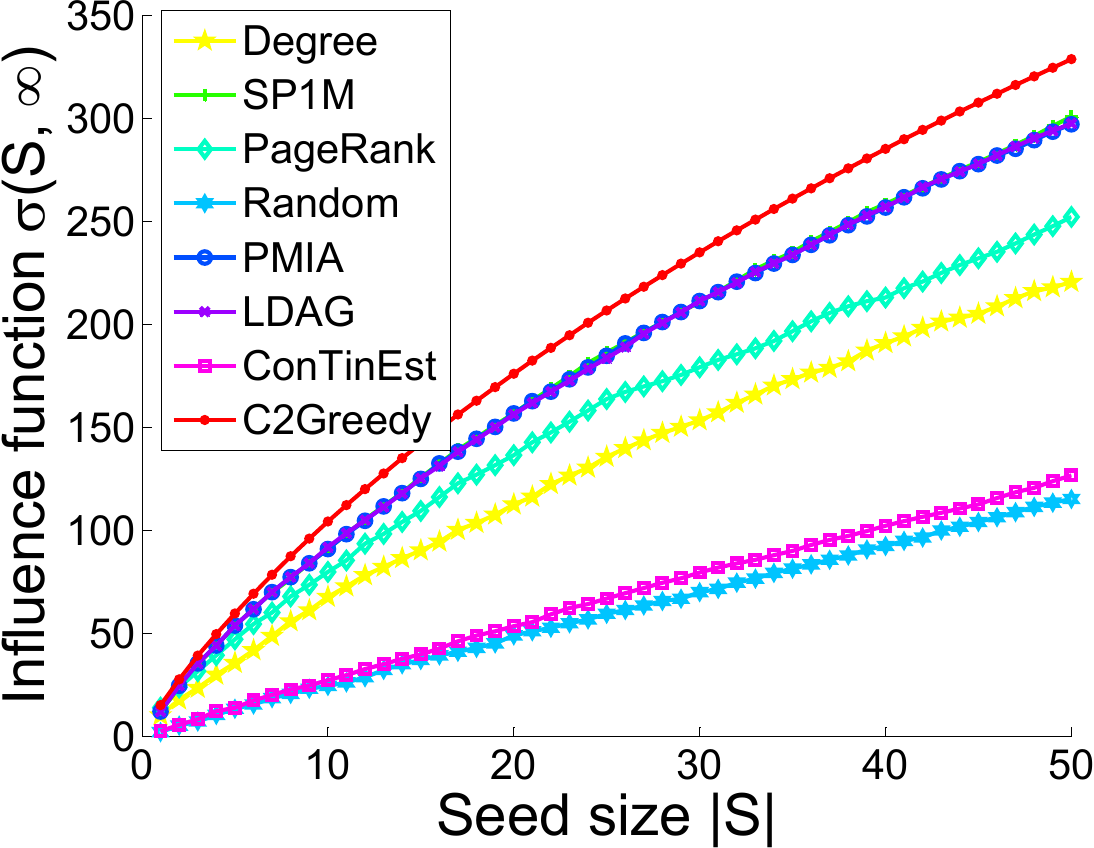}}
\subfigure[Core-periphery network]{\label{fig:corCmp}\includegraphics[scale=.21]{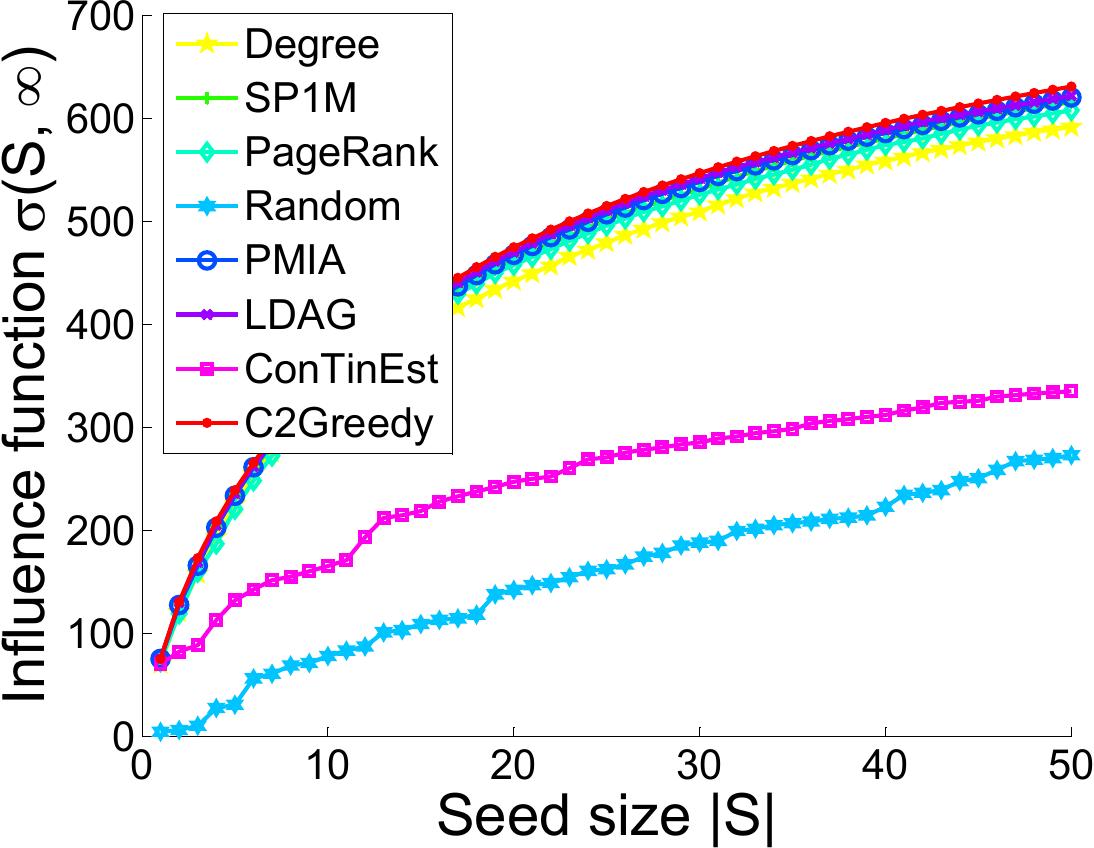}} ~
\subfigure[Real network (WikiVote)]{\label{fig:realCmp}\includegraphics[scale=.21]{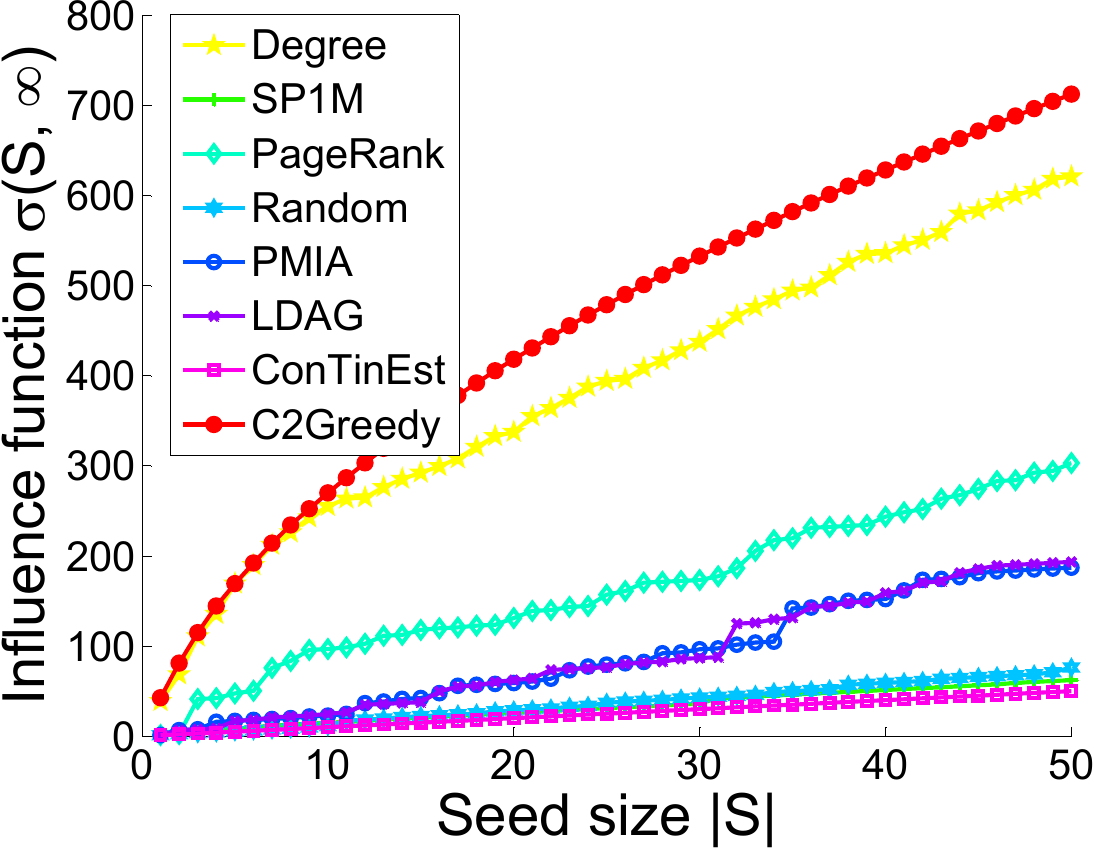}}
\vskip -0.1in
\caption{\small Comparing performance of C2Greedy with state-of-the-art influence maximization methods. Networks of (a), (b), and (c) are synthetic and (d) is a real network.}
\label{fig:compare}
\end{center}
\vskip -0.15in
\end{figure*}

\subsection{SPEED AND SCALABILITY}
\vspace*{-0.1in}
In this part we illustrate the speed benefits of having two closed form updates in the greedy algorithm and also deal with the required single inverse computation of \textsc{C2Greedy} to prove the scalability of our method. 

{\bf Closed form benefits.}
As discussed in Section \ref{sec:kmip}, our main algorithm \textsc{C2Greedy} benefits from closed form computation for both  influence spread (\ref{eq:generalInf}) and greedy selection (\ref{eq:k+1}). 
To show the gain of these closed form solutions, we run the greedy algorithm in three different settings. 
First without using any of (\ref{eq:generalInf}) and (\ref{eq:k+1}) which we call \textsc{Greedy} and uses Monte Carlo simulation to estimate the influence spread. 
Second we only use (\ref{eq:generalInf}) to have the closed form for influence spread without closed form greedy update of (\ref{eq:k+1}) which results in \begin{sc}C1Greedy\end{sc}, and finally \textsc{C2Greedy} which uses both (\ref{eq:generalInf}) and (\ref{eq:k+1}).
Note that we can add lazy update of \cite{leskovec_cost-effective_2007} (see Supplementary) to \begin{sc}Greedy\end{sc} and \begin{sc}C1Greedy\end{sc} to get \begin{sc}LGreedy\end{sc} and \begin{sc}LC1Greedy\end{sc} respectively.
Finally we include the original greedy method \cite{kempe_maximizing_2003} of solving LT model (progressive version of our model) and its lazy variant, with 100 iteration of Monte Carlo simulation. 
Note that for having a good approximation of influence spread in LT model, simulations are run for several thousand iterations, but here we just want to illustrate that the greedy algorithm for HC is much faster than LT, for which 100 iterations is enough. 
Figure \ref{fig:bars} illustrates the speed in log-scale of all seven algorithms for $K=10$ over the Pblogs dataset \cite{adamic_political_2005}. 
Note that the required time of inverse computation (\ref{eq:vQv}) is also included. 
The results confirm that both closed forms decrease the timing \textit{significantly} (1 sec vs. 461 sec for the next best variation) and help the greedy algorithm far more than the lazy update.

{\bf Per-seed selection time.}
The major computational bottleneck of our algorithm is the inverse computation of (\ref{eq:vQv}). But fortunately this is needed once and at the beginning of the process. Here assuming offline inverse computation, we are interested in the cost of adding each seed. Figure \ref{fig:amortized} compares the cost of selecting $k$-th seed for the five variation of our algorithm, plus LT and LazyLT all described  previously. As expected \textsc{C2Greedy} requires the lowest computation time per seed. 
Also, the timing per seed for \textsc{C2Greedy} is strictly decreasing over the size of $\mathcal{S}$, because the matrix $N$ shrinks, while per seed selection time of LT is increasing on average, because more seeds probably lead to bigger cascades.
\begin{figure}
\vskip -0.1in
\begin{center}
\subfigure[Total time]{\label{fig:bars}\includegraphics[scale=.21]{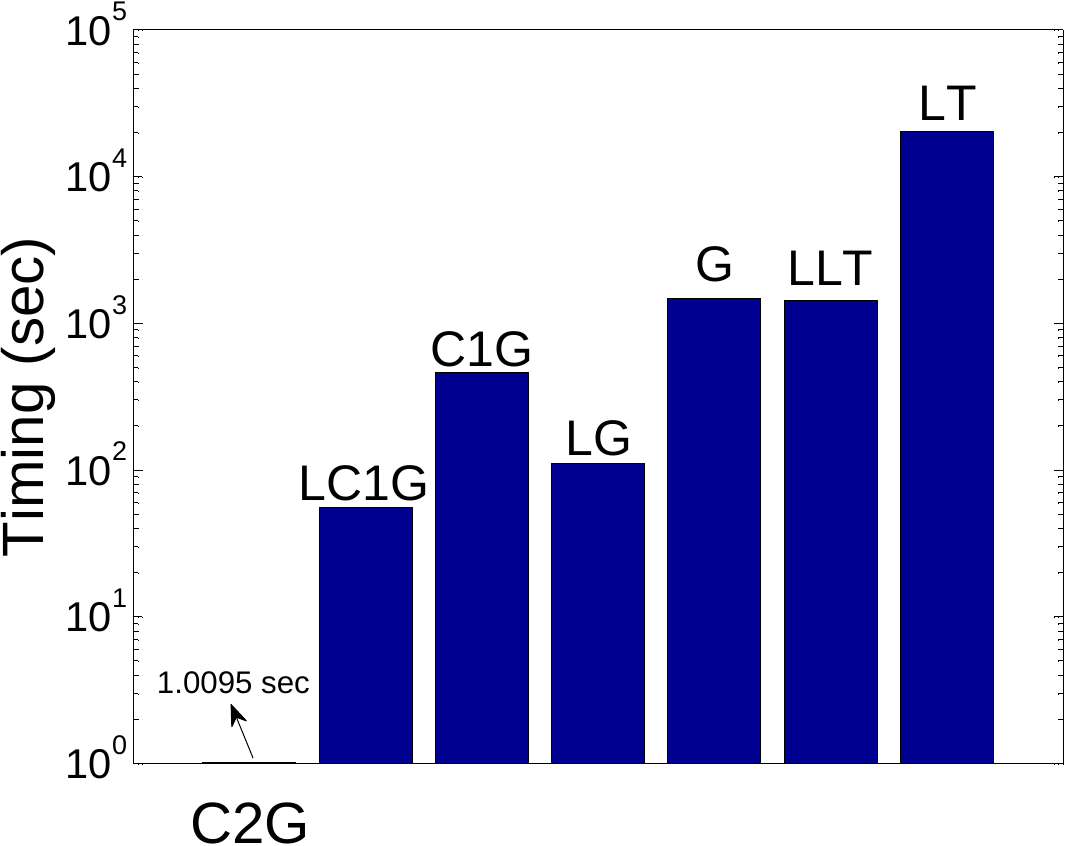}} ~
\subfigure[Time per seed 
(sec)]{\label{fig:amortized}\includegraphics[scale=.21]{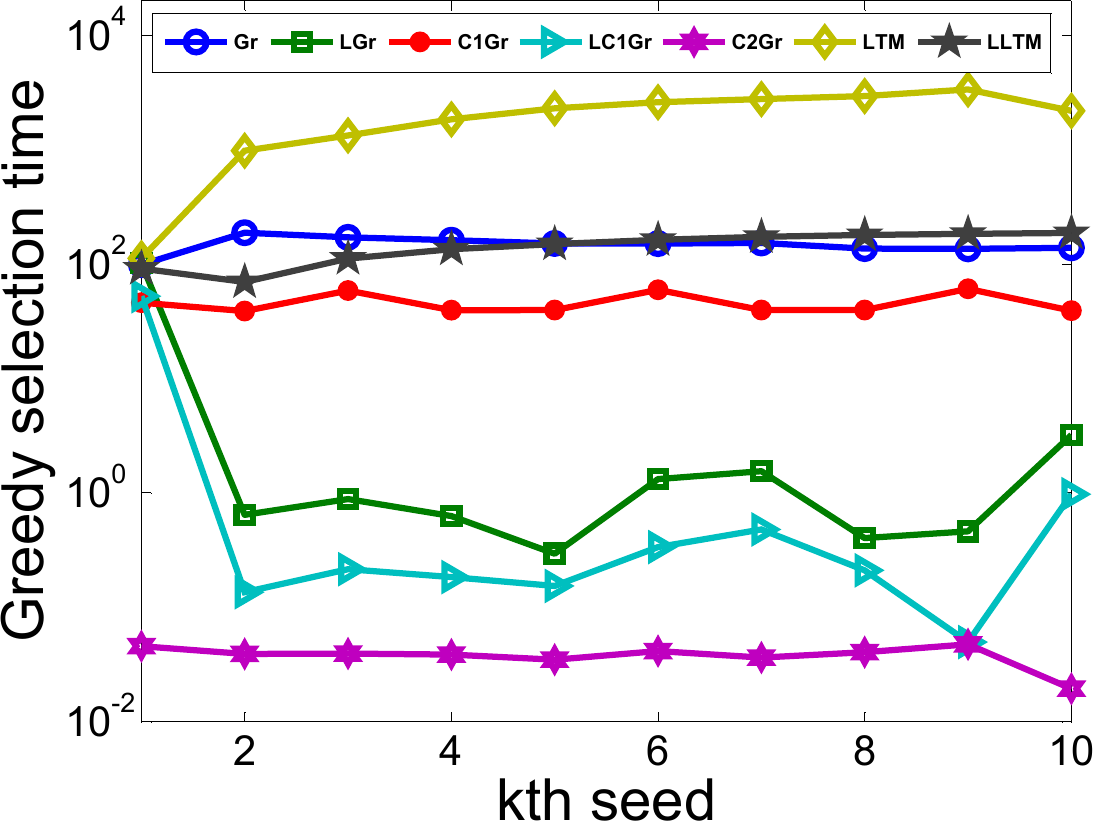}}
\vskip -0.1in
\caption{\small In (a) we compare the total timing of seven algorithms to investigate the effect of closed updates on speed and in (b) we show the per-seed required time for the same experiment.}
\label{fig:timing}
\end{center}
\vskip -0.15in
\end{figure} 

\begin{figure}
\vskip -0.1in
\begin{center}
\subfigure[Inverse Approximation]{\label{fig:invEst}\includegraphics[scale=.21]{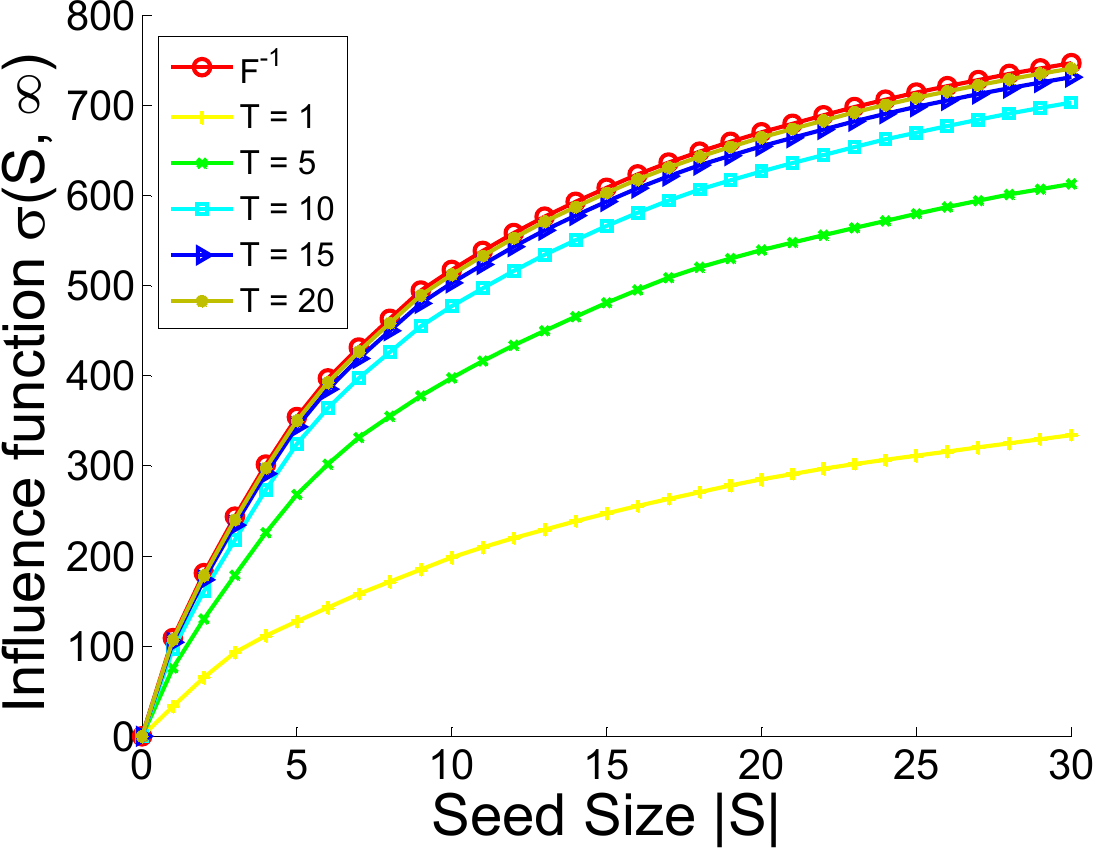}} ~
\subfigure[Scalability for $K=10$]{\label{fig:scalability}\includegraphics[scale=.21]{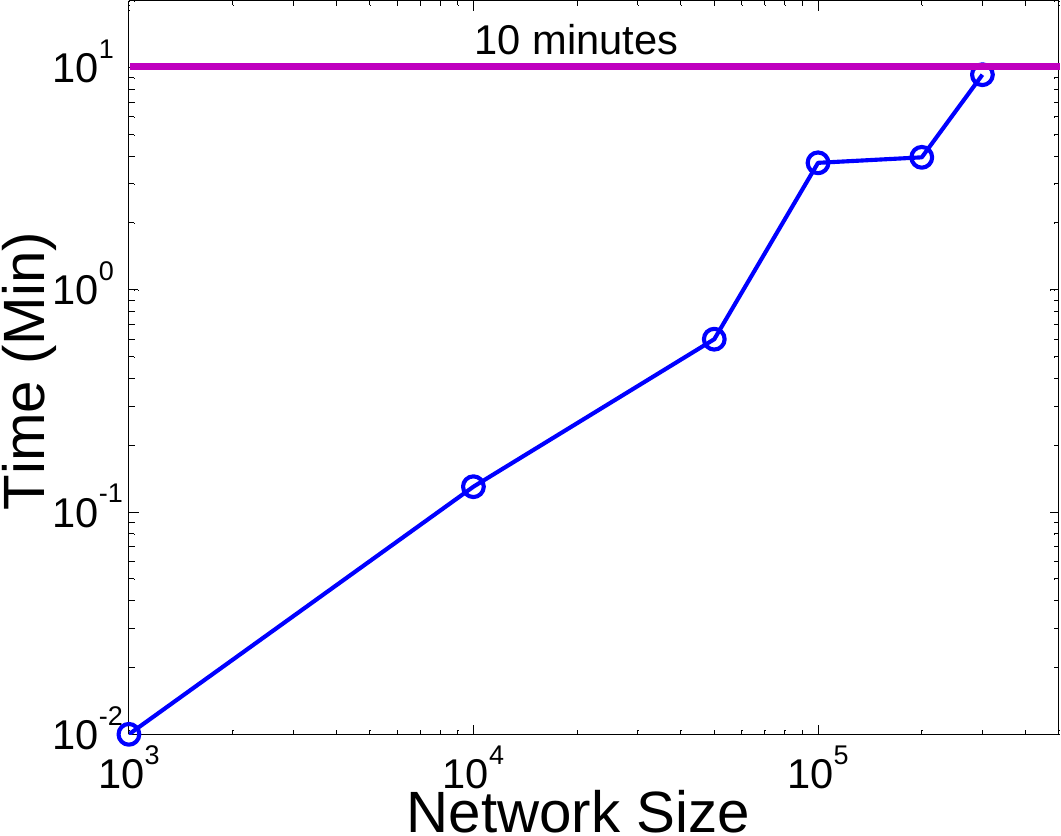}}
\vskip -0.1in
\caption{\small Timing for inf. max. in large scale networks by exploiting (a) inverse approximation and (b) parallel programming. Results of (b) are on FF networks with edge density 2.5. }
\end{center}
\vskip -0.15in
\end{figure} 

{\bf Inverse approximation.}
Going beyond networks of size $10^4$ makes the inverse computation problematic, but fortunately we have a good approximation of the inverse through the following expansion: $F=(I-R)^{-1}\approx I+R^1+R^2+...+R^T$. Since all eigenvalues of $R$ are less than or equal to 1 contribution of $(R)^i$ to the summation drops very fast as $i$ increases. 
The question is how many terms of the expansion, $T$, is enough for our application. 
Heuristically we choose the (effective) diameter of the graph as the number that provides us with a good approximation of $F^{-1}$. 
Note that the $i$th term of the expansion pertains to the shortest paths of size $i$ between any pair of nodes. 
Since the graph diameter is the longest shortest path between any pair of nodes, having that many terms gives us a good approximation of $F^{-1}$. 
This is also demonstrated by the experimental result of Figure \ref{fig:invEst} where we compare the result of the influence maximization on the WikiVote network with diameter 15, with actual $F^{-1}$ and its approximation for different $T$'s. 
As discussed when $T$ reaches to the diameter, the result of the algorithm that uses inverse approximation coincides with the algorithm that uses the exact inverse. 

{\bf Scalability.} Finally to show the scalability of \textsc{C2Greedy} we perform influence maximization on networks with sizes up to $3 \times 10^5$. For speeding up the large scale matrix computation of the Algorithm \ref{alg1} we developed an MPI version of our code which allows us to run \textsc{C2Greedy} on computing clusters. Figure \ref{fig:scalability} shows the running time of  \textsc{C2Greedy} for ForestFire networks of sizes varying between 1K to 300K with edge density 2.5 (i.e. ratio of edges to nodes) and effective diameter of 10. The MPI code was run on up to 400 cores of 2.8 GHz. As Figure \ref{fig:scalability} indicates even for the largest tested network with 0.3 million nodes and 0.75 million edges \textsc{C2Greedy} takes less than 10 minutes for $K=10$. 

To give a sense of our achievement in scalability we briefly mention the result of one of the state-of-the-art methods: The scalable ConTinEst \cite{due_scalable_2013} runs with 192 cores for almost 60 minutes on ForestFire network of size 100K and edge density of 1.5 to select 10 seeds, where our \textsc{C2Greedy} finishes in less than 2 minutes for the similar ForestFire network (100K nodes and density 1.8) with 200 cores.

\begin{figure}
\vskip -0.1in
\begin{center}
\subfigure[Non-progressive cascade of ML research topic.]{\label{fig:nprog}\includegraphics[scale=.21]{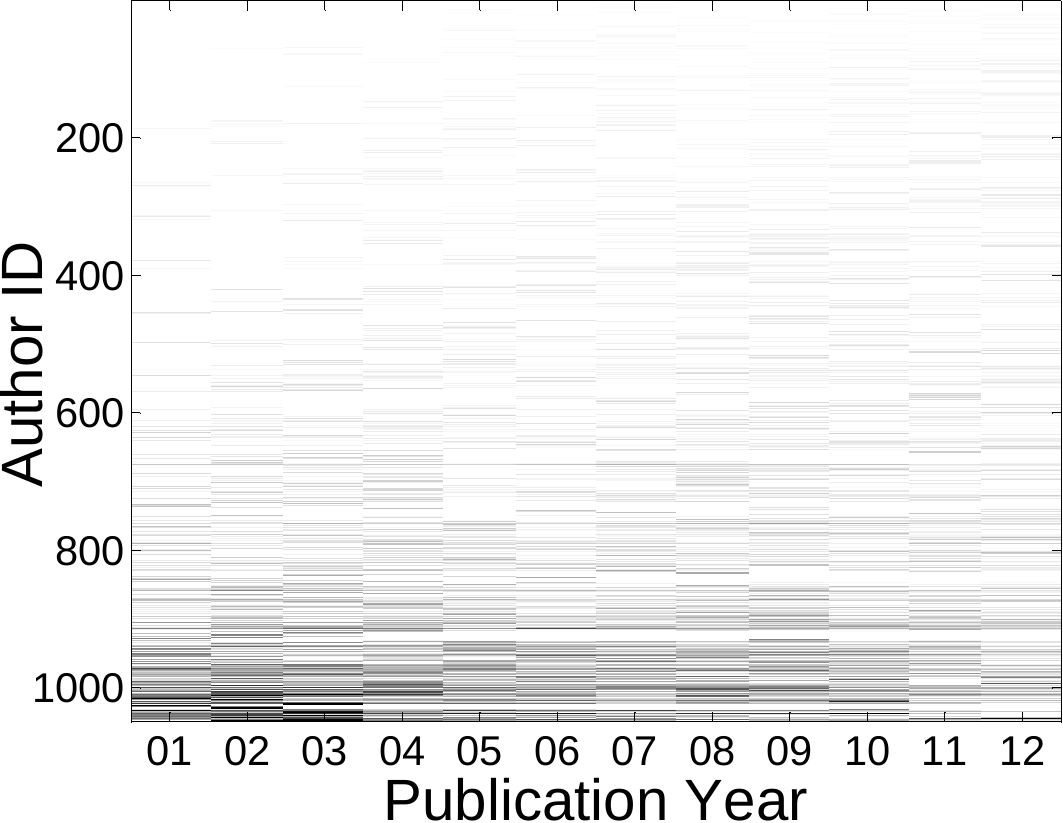}} ~
\subfigure[Inf. max. on inferred WFW network, MLWFW.]{\label{fig:realKmip}\includegraphics[scale=.21]{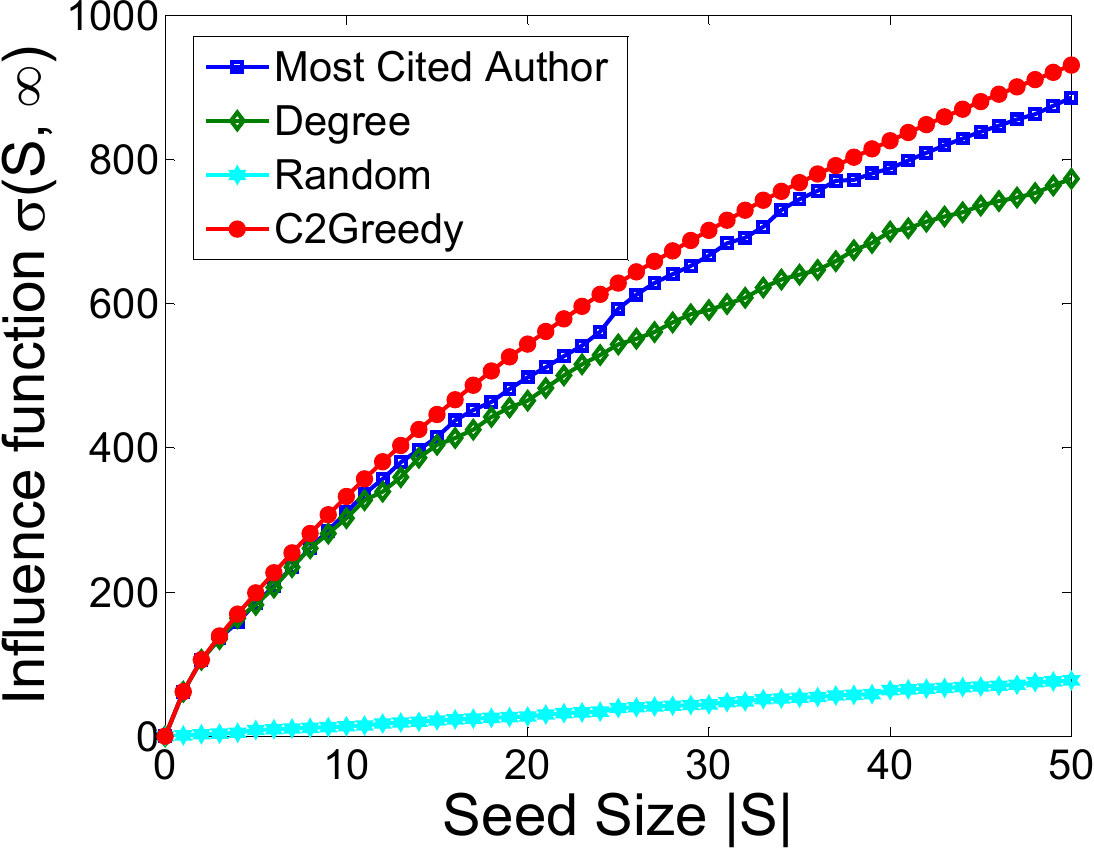}}
\vskip -0.1in
\caption{\small In (a) we show the existence of non-progressive cascade of ML research topic where white means all papers of the author is about ML. In (b) we compare C2Greedy result with other baselines such as most cited author.}
\label{fig:real}
\end{center}
\vskip -0.15in
\end{figure} 

\subsection{REAL NON-PROGRESSIVE CASCADE}\label{subsec:real}
\vspace*{-0.1in}
Collaboration and citation networks are two well-known real networks that have been studied in social network analysis literature \cite{kempe_maximizing_2003, tang_social_2007}. Here we introduce a new network that represents who-follows-whom (WFW) in a research community. 
Note that the nodes in the collaboration and citation networks are authors and papers respectively but in WFW network nodes are authors and edges are inferred from citations. 
A directed edges $(u,v)$ means that author $u$ has cited one of the papers of author $v$ which reveals that $u$ follows/reads papers of $v$. 
Here we investigate the ``research topic adoption'' cascade. 
Researchers adopt new research topics during their careers and influence their peers along different research communities. 
The process starts with an arbitrary research topic for each author and they are influenced by the research topic of those they follow and switch to another topic. 
For example a data mining researcher that follows mostly the papers of machine learning authors is probably going to switch his research topic to machine learning. 

For illustration, we consider only the authors who have published papers in Machine Learning (ML) conferences and journals in a given time period. 
For the list of ML related conferences and journal we use resources of ArnetMiner project \cite{tang_arnetminer:_2008}.
We consider each time step a year and study the years 2001 - 2012. 
An author is an \emph{active} ML author in a given year if at least half of his publications in that year was published in ML venues. 
Figure \ref{fig:nprog} shows the change in the percentage of ML publication of ML authors who has more than 70 publication in years between 2001 and 2012. 
As Figure \ref{fig:nprog} suggests, cascade of ML research topic is a non-progressive process and researcher switch back and forth between ML and other alternatives. Among 1049 authors of Figure \ref{fig:nprog} about 400 of them are core ML authors who have rarely published in any other topic, but the non-progressive nature of the process is more visible in the rest (bottom part of Figure \ref{fig:nprog}). 

Next we perform influence maximization on the inferred WFW network which we call MLWFW network. We extract the MLWFW network from the combined citation network of DBLP and ACM which is publicly available as a part of ArnetMiner project \cite{tang_arnetminer:_2008} and learn the edge weights similar to the weighted cascade model of \cite{kempe_maximizing_2003}. The MLWFW network of 2001 - 2012 time frame consists of 10604 authors and 168918 edges. Figure \ref{fig:realKmip} compares the result of influence maximization using \textsc{C2Greedy} and other baselines. Note that other than regular baselines in this specific domain we have another well-known method which is ``most cited author'' that is equal to selecting authors with highest weighted in-degree in MLWFW network. As Figure \ref{fig:realKmip} illustrates, \textsc{C2Greedy} outperforms all of the other methods. Note that the list of $K$ most influential authors in this experiment means that ``if'' those authors were switching to the ML topic completely (becoming a member of  seed set $\mathcal{S}$) they would make the topic vastly popular. Therefore, although the seed set contains the familiar names of well-known ML authors (e.g., Michael I. Jordan and John Lafferty in first and second places), sometimes we encounter exceptions. For example, in the list of top 10 authors selected by \textsc{C2Greedy} we have ``Emery N. Brown'' who is a renowned neuroscientist with publications in ``Neural Computation'' journal.

\vspace*{-2pt}
\section{CONCLUSION}
\vspace*{-0.05in}
We introduced the Heat Conduction Model which is able to capture both social influence and non-social influence, and extends many of the existing non-progressive models.  We also presented a scalable and provably near-optimal solution for influence maximization problem by establishing three essential properties of HC: 1) submodulairty of influence spread, 2) closed form computation for influence spread, and 3) closed form greedy selection. We conducted extensive experiments on networks with hundreds of thousands of nodes and close to million edges where our proposed method gets done in a few minutes, in sharp contrast with the existing methods. The experiments also certified that our method outperforms the state-of-the-art in terms of both influence spread and scalability. Moreover, we exhibited the first real non-progressive cascade dataset for influence maximization. We believe that our method removes the computational barrier that prevented the literature from considering the non-progressive influence models. 
Studying other forms of non-progressive influence models, such as non-progressive IC, is an interesting future work. 
 
\subsubsection*{\large{Acknowledgements}}
This research was supported in part by DTRA grants HDTRA1-09-1-0050 and HDTRA1-14-1-0040, DoD ARO
MURI Award W911NF-12-1-0385, and NSF grants CNS-10171647, CNS-1117536 and CNS-1411636.

\bibliographystyle{IEEEtran}
\bibliography{uai15}


\section{Supplementary}

\subsection{Influence maximization under progressive model: A brief review}
CELF method of Leskovec et al. \cite{leskovec_cost-effective_2007} attempts to speed up the original greedy method, proposed by Kempe et al. \cite{kempe_maximizing_2003}, by reducing the number of calls to Monte Carlo routine for spread computation. CELF lazy method is based on the submodularity of the influence spread and can be applied to any submodular maximization problem. Although lazy evaluation improves the running time of the original greedy method by up to 700 times \cite{leskovec_cost-effective_2007}, it still does not scale to large graphs \cite{chen_scalable_2010}.

Recently heuristics have been proposed to approximate influence spread for LT \cite{chen_scalable_2010} and IC \cite{chen_scalable_20101} which enables the greedy method to scale for large networks. Chen et al. \cite{chen_scalable_2010} suggest to use a local directed acyclic graph (LDAG) per node, instead of considering the whole graph, to approximate the influence flowing to the node. Goyal et al. propose SIMPATH method \cite{goyal2011simpath} under the LT model which is built on CELF method \cite{leskovec_cost-effective_2007}. They approximate the influence spread by enumerating the simple paths starting from the seeds within a small neighborhood. Both of these methods have parameters to be tuned which control the trade-off between running time and accuracy of influence spread estimation. Methods presented in \cite{chen_scalable_2010, goyal2011simpath} accelerate the greedy method \cite{kempe_maximizing_2003} substantially and achieve high performance in influence maximization.

Gomez-Rodriguez et al. \cite{rodriguez_uncovering_2011} propose a progressive continuous time influence model with dynamics similar to IC and show that influence maximization is NP-hard for this model as well. They show submodularity of influence spread and exploit the same greedy algorithm. In contrast to all other progressive models, influence spread has a closed form for this model but the computation is not scalable for large scale networks. A recent work \cite{due_scalable_2013} has scaled influence computation by developing a randomized algorithm for approximating it.

\subsection{Proof of Theorem 1}
For proving this theorem we need the following lemmas.
\begin{lemma}
When an interior node $s$ is added to the current absorbing set $\mathcal{S}$, the new fundamental matrix $F$ can be calculated from the previous one using the following equation:
\begin{equation}
F_{ij}^{\mathcal{S} \cup \{s\}}=F^{\mathcal{S}}_{ij}-\frac{F^{\mathcal{S}}_{is}F^{\mathcal{S}}_{sj}}{F^{\mathcal{S}}_{ss}}, \nonumber
\end{equation} 
\end{lemma}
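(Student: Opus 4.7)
The plan is to give a probabilistic proof based on the random-walk interpretation of the fundamental matrix recalled just after equation~(\ref{eq:vQv}), since this matches the absorbing Markov chain framework the paper has already built. Recall $F^{\mathcal{S}}_{ij}$ is the expected number of visits to interior node $j$ of a random walk started at $i$ before absorption by any node in $\mathcal{B} = \mathcal{S} \cup \{n\}$. Promoting $s$ to an absorbing state prunes every trajectory as soon as it touches $s$, and the goal is to quantify exactly what is lost in expectation.

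First I would fix $i,j \in \mathcal{I} \setminus \{s\}$ and decompose the visit count $V_j$ in the original (pre-promotion) chain as
\begin{equation}
V_j \;=\; V_j^{(s)} \;+\; \mathbf{1}[H_s < \infty]\,W_j,
\end{equation}
where $H_s$ is the first hitting time of $s$, the quantity $V_j^{(s)}$ counts visits to $j$ strictly before $\min(H_s, H_{\mathcal{B}})$, and $W_j$ counts visits to $j$ from time $H_s$ onward until absorption by $\mathcal{B}$. Taking expectations starting at $i$ and invoking the strong Markov property at the stopping time $H_s$ (so that, conditional on $\{H_s<\infty\}$, the shifted walk has the same law as a walk started fresh at $s$), one gets
\begin{equation}
F^{\mathcal{S}}_{ij} \;=\; F^{\mathcal{S} \cup \{s\}}_{ij} \;+\; P_i(H_s<\infty)\cdot F^{\mathcal{S}}_{sj}.
\end{equation}

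Next I would identify the hitting probability $P_i(H_s<\infty)$ in terms of entries of $F^{\mathcal{S}}$ by applying the very same decomposition with $j=s$: since $V_s^{(s)} = 0$ (up to time $\min(H_s,H_{\mathcal{B}})$ the walk has not yet touched $s$), this collapses to $F^{\mathcal{S}}_{is} = P_i(H_s<\infty)\, F^{\mathcal{S}}_{ss}$, hence $P_i(H_s<\infty) = F^{\mathcal{S}}_{is}/F^{\mathcal{S}}_{ss}$. Substituting back yields the claimed identity.

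The main obstacle, such as it is, lies in the strong-Markov step: $W_j$ is a functional of the post-$H_s$ trajectory all the way to absorption, so one must verify that $H_s$ is a stopping time and that the conditional law of the shifted walk given $\{H_s<\infty\}$ agrees with a fresh walk from $s$ in the original absorbing chain. This is standard but deserves an explicit line. As a sanity check and an entirely independent route, the same equation can be obtained algebraically: writing $I-R^{\mathcal{S}}$ in block form with the $s$-row and $s$-column placed last, the matrix $I - R^{\mathcal{S}\cup\{s\}}$ is exactly its leading principal block, and the Schur-complement / block-inverse identity recovers $F^{\mathcal{S}\cup\{s\}}_{ij} = F^{\mathcal{S}}_{ij} - F^{\mathcal{S}}_{is}F^{\mathcal{S}}_{sj}/F^{\mathcal{S}}_{ss}$ in a single line, with $1/F^{\mathcal{S}}_{ss}$ playing the role of the scalar Schur complement. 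The agreement of the two derivations is reassuring and also makes clear that the formula really is just a one-step rank-one update of the inverse.
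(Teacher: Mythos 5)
Your proposal is correct, but your primary argument is genuinely different from the paper's. The paper proves this lemma in one line, by exactly the algebraic route you relegate to a ``sanity check'': $I-R^{\mathcal{S}\cup\{s\}}$ is the leading principal block of $I-R^{\mathcal{S}}$ (with the $s$-row and $s$-column split off), and the block-inverse/Schur-complement identity gives the rank-one update directly. Your main argument instead runs through the random-walk interpretation: decompose the visit count to $j$ at the stopping time $H_s$, apply the strong Markov property to identify the post-$H_s$ contribution as $P_i(H_s<\infty)\,F^{\mathcal{S}}_{sj}$, and then specialize to $j=s$ to read off $P_i(H_s<\infty)=F^{\mathcal{S}}_{is}/F^{\mathcal{S}}_{ss}$ (using the convention that $F_{ss}$ counts the visit at time $0$, which is what makes $V_s^{(s)}=0$ and the identification clean). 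This is sound, and it buys something the algebraic proof does not: it exhibits $F^{\mathcal{S}}_{is}/F^{\mathcal{S}}_{ss}$ as a hitting probability, which is essentially the content of the paper's Lemma~3 ($Q^{\mathcal{S}\cup\{s\}}_{is}=F^{\mathcal{S}}_{is}/F^{\mathcal{S}}_{ss}$) obtained there by a separate algebraic computation --- so your route derives Lemmas~1 and~3 in one stroke. The paper's route is shorter and requires no care about stopping times or almost-sure absorption; yours is longer but more interpretable and makes the ``why'' of the correction term transparent. Your restriction to $i,j\in\mathcal{I}\setminus\{s\}$ is the right domain for the updated fundamental matrix, and your flagged obstacle (justifying the strong Markov step for the functional $W_j$ of the shifted path) is indeed the only point needing an explicit line; it holds because $H_s$ is a hitting time of the canonical filtration and the chain is time-homogeneous.
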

\begin{proof}
The proof is straightforward based on Schur complement theorem \cite{zhang2006schur}.
\end{proof}
This lemma helps avoiding the matrix inversion required for computing the new $F^{\mathcal{S} \cup \{s\}}$ whenever an interior node $s$ is added to the seed set $\mathcal{S}$. 
\begin{lemma}
The expected number of passages through an interior node and the expected number of passages through its interior neighbors has the following relation:
\[
 F^{\mathcal{S}}_{ij} =
  \begin{cases}
   \sum_k F^{\mathcal{S}}_{ik}R^{\mathcal{S}}_{kj} & i \neq j \\
   1+\sum_k F^{\mathcal{S}}_{ik}R^{\mathcal{S}}_{kj} & i = j
  \end{cases}
\]
\end{lemma}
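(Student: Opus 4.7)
The plan is to derive the identity directly from the definition $F^{\mathcal{S}} = (I - R^{\mathcal{S}})^{-1}$ introduced in Section~\ref{heatEq}. Starting from $F^{\mathcal{S}}(I - R^{\mathcal{S}}) = I$ and rearranging gives the matrix identity $F^{\mathcal{S}} = I + F^{\mathcal{S}} R^{\mathcal{S}}$. Reading off the $(i,j)$ entry yields $F^{\mathcal{S}}_{ij} = \delta_{ij} + \sum_k F^{\mathcal{S}}_{ik} R^{\mathcal{S}}_{kj}$, which is exactly the claimed piecewise formula once the cases $i=j$ and $i\neq j$ are split. So the proof is essentially one line of block-matrix algebra followed by extracting coordinates.

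The only thing that really needs to be checked is that $I - R^{\mathcal{S}}$ is invertible so that $F^{\mathcal{S}}$ is well-defined. This is a standard fact about absorbing Markov chains: because each interior node can reach the absorbing boundary $\mathcal{B} = \mathcal{S}\cup\{n\}$ through the block $B$ (in particular, the bias node $n$ is reachable from every node via the weight $\beta_i>0$), the spectral radius of the interior-to-interior block $R^{\mathcal{S}}$ is strictly less than one. I would justify the inverse by the Neumann series $(I-R^{\mathcal{S}})^{-1}=\sum_{t\geq 0}(R^{\mathcal{S}})^t$, which is exactly the expansion later reused in the inverse-approximation discussion of the experiments.

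As a complementary probabilistic viewpoint, I would note a one-line last-step decomposition of the random walk started at $i$: the total number of visits to interior node $j$ equals $\mathbf{1}[i=j]$ plus the number of later visits, and each later visit corresponds to being at some interior $k$ one step earlier and then transitioning to $j$ with probability $R^{\mathcal{S}}_{kj}$. Taking expectations and using $F^{\mathcal{S}}_{ik}=\mathbb{E}[\#\text{visits to }k\mid \text{start at }i]$ recovers the same identity, which also reaffirms the interpretation of $F$ used in Section~\ref{heatEq}.

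There is no genuine obstacle here; the lemma is essentially the statement $F=I+FR$ written coordinate-wise, and the cleanest thing to guard against is simply overlooking the invertibility justification. The reason the lemma is worth recording is its downstream use: together with Lemma~1 (the Schur-complement update), it is what lets \textsc{C2Greedy} turn the greedy selection rule into the closed-form expression in Section~\ref{sub:greedySel} without any matrix inversion beyond the initial one.
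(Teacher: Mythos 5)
Your proof is correct and follows essentially the same route as the paper's: both start from $F^{\mathcal{S}}(I-R^{\mathcal{S}})=I$, rearrange to $F^{\mathcal{S}}=I+F^{\mathcal{S}}R^{\mathcal{S}}$, and read off the $(i,j)$ entry. The extra remarks on invertibility via the Neumann series and the probabilistic last-step decomposition are sound additions but not needed beyond what the paper already does.
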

\begin{proof}
We know $F^S=(I-R^S)^{-1}$. Start with $(I-R^S)^{-1}(I-R^S)=I$ and after multiplication and rearranging we get to the lemma's statement: $F^S=I+F^SR^S$
\end{proof} 
\begin{lemma}
Starting from node $i$ the absorption probability by node $s$, when $S \cup \{s\}$ is the absorbing set, can be obtained from the expected number of passages through node $s$ when it was not absorbing:
\begin{equation}
Q_{is}^{\mathcal{S} \cup \{s\}}=\frac{F^{\mathcal{S}}_{is}}{F^{\mathcal{S}}_{ss}}.
\end{equation}
\end{lemma}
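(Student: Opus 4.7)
The plan is to give a short probabilistic proof based on the strong Markov property. Think of the original absorbing chain with boundary $\mathcal{S}\cup\{n\}$ and $s$ still interior; this is the chain whose transient-to-transient transition matrix is $R^{\mathcal{S}}$, so $F^{\mathcal{S}}_{ij}$ counts the expected number of visits to $j$ starting from $i$ before absorption. Now form the modified chain by additionally making $s$ absorbing, i.e.\ with boundary $\mathcal{S}\cup\{s\}\cup\{n\}$. The crucial observation is that the two chains agree on all sample paths up to the first visit to $s$: in the modified chain the walk stops there, while in the original chain it simply continues.

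First I would let $p_{is}$ denote the probability that a walk started at interior node $i$ reaches $s$ before being absorbed by $\mathcal{S}\cup\{n\}$ in the original chain. By the equivalence of sample paths up to the first hit of $s$, this $p_{is}$ is exactly the absorption probability $Q^{\mathcal{S}\cup\{s\}}_{is}$ in the modified chain. So it suffices to show $p_{is}=F^{\mathcal{S}}_{is}/F^{\mathcal{S}}_{ss}$.

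Next I would decompose the expected visit count using the strong Markov property at the first hitting time $\tau_s$ of $s$:
\begin{equation}
F^{\mathcal{S}}_{is}\;=\;\mathbb{E}_i\bigl[\#\text{visits to }s\bigr]\;=\;\Pr_i(\tau_s<\infty)\cdot\mathbb{E}_s\bigl[\#\text{visits to }s\bigr]\;=\;p_{is}\,F^{\mathcal{S}}_{ss}, \nonumber
\end{equation}
where ``visits'' means visits before absorption by $\mathcal{S}\cup\{n\}$, and $\mathbb{E}_s[\#\text{visits to }s]=F^{\mathcal{S}}_{ss}$ because conditionally on ever reaching $s$ the future evolution restarts from $s$. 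Rearranging yields $p_{is}=F^{\mathcal{S}}_{is}/F^{\mathcal{S}}_{ss}$, and combining with the identification above gives the claim.

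The only mildly subtle point is justifying that $F^{\mathcal{S}}_{ss}>0$ (so the division is legitimate) and that the decomposition is valid; both follow immediately because starting at $s$ one visits $s$ at time $0$, so $F^{\mathcal{S}}_{ss}\ge 1$, and the strong Markov property applies at the stopping time $\tau_s$. A purely algebraic alternative, which I would mention as a sanity check, is to apply Lemma~1 to the enlarged boundary: $F^{\mathcal{S}\cup\{s\}}_{ij}=F^{\mathcal{S}}_{ij}-F^{\mathcal{S}}_{is}F^{\mathcal{S}}_{sj}/F^{\mathcal{S}}_{ss}$, then use $Q^{\mathcal{S}\cup\{s\}}=F^{\mathcal{S}\cup\{s\}}B^{\mathcal{S}\cup\{s\}}$ together with Lemma~2 (which gives $F^{\mathcal{S}}_{is}=\sum_k F^{\mathcal{S}}_{ik}R^{\mathcal{S}}_{ks}$ for $i\ne s$) to recover the same ratio; this serves as a check without needing the Markov argument, but the probabilistic route is cleaner and more illuminating.
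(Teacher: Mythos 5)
Your proof is correct, but it takes a genuinely different route from the paper's. The paper argues purely algebraically: it expands $Q^{\mathcal{S}\cup\{s\}}_{is}=\sum_j F^{\mathcal{S}\cup\{s\}}_{ij}B^{\mathcal{S}\cup\{s\}}_{js}$, observes that the new boundary column of $B^{\mathcal{S}\cup\{s\}}$ is the old column $R^{\mathcal{S}}_{\cdot s}$, substitutes the Schur-complement update $F^{\mathcal{S}\cup\{s\}}_{ij}=F^{\mathcal{S}}_{ij}-F^{\mathcal{S}}_{is}F^{\mathcal{S}}_{sj}/F^{\mathcal{S}}_{ss}$ from its Lemma~1, and closes the computation with the identity $F^{\mathcal{S}}=I+F^{\mathcal{S}}R^{\mathcal{S}}$ from its Lemma~2 --- exactly the ``sanity check'' you sketch at the end. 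Your primary argument instead bypasses both auxiliary lemmas: you identify $Q^{\mathcal{S}\cup\{s\}}_{is}$ with the probability $p_{is}$ of hitting $s$ before absorption in the original chain (since the two chains have identical sample paths up to the first visit to $s$), and then invoke the standard renewal decomposition $F^{\mathcal{S}}_{is}=p_{is}\,F^{\mathcal{S}}_{ss}$ via the strong Markov property at $\tau_s$. Your route is shorter, self-contained, and makes the ratio $F^{\mathcal{S}}_{is}/F^{\mathcal{S}}_{ss}$ probabilistically transparent (it is the classical hitting-probability-as-ratio-of-Green's-functions identity); the paper's route has the virtue of reusing Lemma~1, which is needed anyway for the incremental update of $F$ in Algorithm~1, so the whole chain of results rests on one piece of linear-algebra machinery. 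One cosmetic point: writing $\Pr_i(\tau_s<\infty)$ is harmless here only because absorbed walks never reach $s$; it would be cleaner to write $\Pr_i(\tau_s<\tau_{\mathrm{abs}})$ to match the event you actually decompose on.
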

\begin{proof}
\begin{eqnarray}
Q_{is}^{\mathcal{S} \cup \{s\}} &=&\sum_{j\in \mathcal{V}\setminus\{\mathcal{S} \cup \{s\}\}} F^{\mathcal{S} \cup \{s\}}_{ij}B^{\mathcal{S} \cup \{s\}}_{js} \nonumber
\\&=&\sum_{j\in \mathcal{V}\setminus\{\mathcal{S}\}} F^{\mathcal{S} \cup \{s\}}_{ij}R^S_{js} \nonumber
\\&=&\sum_{j\in \mathcal{V}\setminus\{\mathcal{S}\}} (F^S_{ij}-\frac{F^S_{is} F^S_{sj}}{F^S_{ss}})R^S_{js} \nonumber
\\&=&\sum_{j\in \mathcal{V}\setminus\{\mathcal{S}\}} F^S_{ij}R^S_{js}-\frac{F^S_{is}}{F^S_{ss}}\sum_{j\in \mathcal{V}\setminus\{\mathcal{S}\}} F^S_{sj}R^S_{js} \nonumber
\\&=&F^S_{is}-\frac{F^S_{is}}{F^S_{ss}}(F^S_{ss}-1) \nonumber
\\&=&\frac{F^S_{is}}{F^S_{ss}}, \nonumber
\end{eqnarray}
where the third and fifth equalities come from lemma 1 and lemma 2 respectively.
\end{proof}

Proof of Theorem 1 is simply an instantiation of Lemma 3 for the case that we add node $s$ as the first seed to the network and get $Q_{is}^{\{s\}}=\frac{F^{\emptyset}_{is}}{F^{\emptyset}_{ss}}$, where $\emptyset$ emphasizes that the bias node is the only boundary.
Note that all of the three lemmas are general in a sense that absorbing set can contain any type of boundary points, including zero-value node like the bias node and one-value node like a seed node.

\subsection{Proof of Theorem 2}
\begin{proof}
Consider an instance of the NP-complete Vertex Cover problem
defined by an undirected and unweighted $n$-node graph $G = (\mathcal{V},\mathcal{E})$ and an integer
$k$; we want to know if there is a set $\mathcal{S}$ of $k$ nodes in $G$ so that
every edge has at least one endpoint in $\mathcal{S}$. We show that this can be
viewed as a special case of the influence maximization (\ref{eq:opt}).
Given an instance of the Vertex Cover problem involving a graph
$G$, we define a corresponding instance of the influence maximization
problem under HC for \textit{infinite time horizon}, by considering the following settings in (\ref{eq:binHC}): (i) $\omega_{ij}=\omega_{ji}=1$, if edge $(i-j)\in \mathcal{E}$, otherwise $\omega_{ij}=\omega_{ji}=0$, (ii) bias node's value is zero $b=0$, and (iii) $\beta_i$ for all $i$'s are equal to a known $\beta$. Note that since each interior node is connected to the zero-value bias node with edge weight $\beta$ it cannot have value larger than $1-\beta$. Hence, if there
is a vertex cover $\mathcal{S}$ of size $k$ in $G$, then one can deterministically
make $\sigma(\mathcal{A},\infty) = k+(n-k)(1-\beta)$ by targeting the nodes in the set $\mathcal{A} = \mathcal{S}$; conversely,
this is the only way to get a set $\mathcal{A}$ with $\sigma(\mathcal{A},\infty) = k+(n-k)(1-\beta)$. 
\end{proof}

\subsection{Proof of Theorem 3}
\label{subsec:proof}

As mentioned in Section \ref{sub:greedySel} when $t \to \infty$ superposition principle applies for HC model. We exploit this fact to prove the submodularity of influence spread. First note that $\sigma(\mathcal{S},\infty)$ computed from (\ref{eq:generalInf}) is the sum of node values and since the conic combination of submodular functions is also submodular it is enough to show that each node value, i.e., $v(i)$ is submodular to proof Theorem 3. 
Here we need to work with the general set of bias nodes (compare to single bias node $b$) which we call ground set $\mathcal{G}$. We introduce a new notation where the value of node $i$ is shown with $v^{\mathcal{S},\mathcal{G}}(i)$. Also seed nodes can have arbitrary value of $\geq b$ instead of all 1 values.For proving the submodularity of $v(i)$ we should prove: 
\begin{equation}
\label{subsig}
{v}^{\mathcal{T}\cup \{s\},\mathcal{G}}(i) -  {v}^{\mathcal{T},\mathcal{G}}(i)  
\geq 
{v}^{\mathcal{S}\cup \{s\},\mathcal{G}}(i) -  {v}^{\mathcal{S},\mathcal{G}}(i)
, \mathcal{T} \subseteq \mathcal{S} 
\end{equation}
We invoke superposition to perform the subtraction:
\begin{eqnarray}
\label{37}
{v}^{\{s_{v_L}\}, \mathcal{G} \cup \mathcal{T}}(i) 
\geq 
{v}^{\{s_{v_R}\}, \mathcal{G} \cup \mathcal{S}}(i) 
, \qquad \mathcal{T} \subseteq \mathcal{S}
\end{eqnarray}
where $v_L$ and $v_R$ emphasize that the value of the new seed node is different in left and right hand side and is qual to $v_L = \big(1 -  {v}^{\mathcal{T},\mathcal{G}}(s)\big)$ and $v_R = \big(1 -  {v}^{\mathcal{S},\mathcal{G}}(s)\big)$. Note that $v_L \geq v_R$ since $\mathcal{T} \subseteq \mathcal{S}$. 
We can not compare the value of nodes in two different networks unless they share same grounds and seeds with possibly different values for each seed. 
Therefore, we try to make the grounds of both sides of (\ref{37}) identical by expanding the LHS of (\ref{37}) using superposition law \cite{agarwal_foundations_2005}:
\begin{eqnarray}
\label{40}
{v}^{\{s_{v_L}\}, \mathcal{G} \cup \mathcal{T}}(i) 
= {v}^{\{s_{v_L}\}, \mathcal{G} \cup \mathcal{S}}(i) 
+ {v}^{\mathcal{D}, \mathcal{G} \cup \mathcal{S} \cup {s},}(i) \\ \nonumber
\end{eqnarray}
where $\mathcal{D} = \mathcal{S} - \mathcal{T}$. Although second term of (\ref{40}) is complicated but for our analysis it is enough to note that it is a non-negative number $\alpha \geq 0$. Now the submodularity inequality (\ref{subsig}) reduces to: 
\begin{equation}
{v}^{\{s_{v_L}\}, \mathcal{G} \cup \mathcal{S}}(i) + \alpha
\geq 
 {v}^{\{s_{v_R}\}, \mathcal{G} \cup \mathcal{T}}(i) 
\end{equation}
Now both sides have the same set of sources and grounds and we now $v_L(u) \geq v_R(u)$ and $\alpha \geq 0$ which completes the proof.

\end{document}